\newtheorem*{definition*}{Definition}
\newtheorem{lemma}{Lemma}
\newtheorem{theorem}{Theorem}
\newtheorem*{proposition*}{Proposition}
\newcommand\ex{\mathrm{E}}
\newcommand\prob{\mathrm{P}}
\newcommand\var{\mathrm{Var}}
\newcommand\Real{\mathbb{R}}
\begin{document}

\title{Estimation of Small $s$-$t$ %
  Reliabilities in Acyclic Networks}
\author{Marco Laumanns and Rico Zenklusen \\
\small\texttt{\{marco.laumanns,rico.zenklusen\}@ifor.math.ethz.ch}\\
\small ETH Zurich, Institute for Operations Research, 8092 Zurich, Switzerland
}

\maketitle

\begin{abstract}
In the classical $s$-$t$ network reliability
problem a fixed network $G$ is given including two designated
vertices $s$ and $t$ (called terminals). The edges are
subject to independent random failure, and the task is to compute
the probability that $s$ and $t$ are connected in
the resulting network, which is known to be $\#P$-complete.
In this paper we are interested in approximating the $s$-$t$
reliability in case of a directed acyclic original
network $G$. We introduce and analyze an
specialized version of a Monte-Carlo algorithm given
by Karp and Luby. 
For the case of uniform edge failure probabilities, we
give a \mbox{worst-case} bound on the number of samples that
have to be drawn to obtain an $\epsilon-\delta$ approximation,
being sharper than the original upper bound.
We also derive a variance reduction of the
estimator which reduces the expected number of iterations
to perform to achieve the desired accuracy when
applied in conjunction with different stopping rules.
Initial computational results on two types of random networks
(directed acyclic Delaunay graphs and a slightly modified version
of a classical random graph) with up to one million vertices are presented.
These results show the advantage of the introduced Monte-Carlo approach compared
to direct simulation when small reliabilities have to
be estimated and demonstrate its applicability on \mbox{large-scale}
instances.

\end{abstract}

\section{Introduction}
In the classical $s$-$t$ network reliability problem, a
fixed graph $G$ with two special vertices $s$ and $t$
 is given whose edges fail (disappear) independently of
each other with some given probability. The task is to
determine the probability that $s$ and $t$ are still
connected in the resulting network after edge failures. A
famous related problem is the all-terminal reliability
problem, where the goal is to determine the probability that
all vertices are still connected to each other after edge
failures (for further information on various reliability
problems see  \cite{colbourn_1987_combinatorics}).
Both problems are known to be computationally hard
($\#P$-complete) even for very restricted classes of graphs $G$
\cite{valiant_1979_complexity,provan_1983_complexity}.
In particular, the $s$-$t$ reliability problem
remains hard in the case when $G$ is a directed acyclic planar graph of maximal degree
three \cite{provan_1986_complexity}. Therefore we are interested in finding
approximations.

Randomized algorithms respecting some
relative error bound with high probability have shown to be
interesting approaches for different reliability
problems
\cite{karp_1985_montecarlo,karger_1995_randomized,karger_1997_implementing}.
As we are interested in relative error
bounds, there is a significant difference between the
estimation of the failure probability of the network
and estimating the probability that the
network is intact after edge failures. This results from the
fact that estimating small probabilities by sampling is in
general much more difficult than estimating large probabilities.
Moreover, the techniques used for the estimation of reliability values
often differ significantly from those used for the estimation
of the failure probability.
Most of the literature concentrates on estimating the
failure probability because in typical applications, as
communication or electrical networks, we have a
highly reliable network and are interested in accurately estimating the
probability of rare failures.

In this paper, we concentrate on estimating the probability that the network is intact, i.e., the
probability that there is a path from $s$ to $t$ after edge
failures. This is motivated by certain models of spreading processes on networks, such as
disease spreading, which can be mapped onto reliability
problems \cite{grassberger_1983_critical,sander_2001_percolation,%
warren_2001_firewalls,newman_2002_spread,newman_2003_structure}. In this context,
a path from $s$ to $t$ represents the spread of
a disease from $s$ to $t$, and we are interested in estimating the
probability of the rare event that the disease spreads over a
long distance from $s$ to $t$ .

The basis of our approach is a method presented by Karp and Luby
\cite{karp_1985_montecarlo} for estimating the
failure probability in an $s$-$t$ reliability problem when the
graph $G$ is planar, which can be easily adapted for estimating the
probability of connectedness of $s$ and $t$ on an arbitrary graph $G$.
To be efficiently applicable, however, the underlying graph $G$
needs to fulfill some additional properties, such as very low
intactness probabilities of the edges and low vertex degrees.
Furthermore, a computationally expensive preprocessing phase is
required, which computes different quantities needed to efficiently
sample from the proposed sample space. This makes the method not
suited for application on large networks. We show in
this paper that in the case of a directed acyclic network $G$,
some of these problems can be circumvented, and the resulting
algorithm can be applied to \mbox{large-scale} instances. The 
simplifications are due to the fact that generating
an $s$-$t$ path uniformly at random in an acyclic graph can be
performed in linear time (see Section~\ref{sec:technical_details}). %
Except for very restricted classes of initial networks such
as series-parallel
networks (see \cite{satyanarayana_1985_lineartime} for
further information), no practically useful methods for estimating small
$s$-$t$ reliabilities on large networks are known. Considering directed
acyclic graphs can thus be seen as a natural next step.

The paper is organized as follows. We begin with some
preliminaries in Section~\ref{sec:preliminaries}.
In Section~\ref{sec:direct_montecarlo} the
direct \mbox{Monte-Carlo} approach is described and
its efficiency is analyzed. Section~\ref{sec:our_algorithm}
then presents our algorithm for the estimation of small
reliabilities in directed acyclic graphs up to some technical details that
are explained in Section~\ref{sec:technical_details}. In
Section~\ref{sec:nb_samples} we discuss how many samples have
to be drawn in our algorithm to obtain a good estimate of
the reliability with high probability.
Section~\ref{sec:computational_results} contains
computational results on two types of random directed acyclic networks
comparing the direct \mbox{Monte-Carlo}
approach with our algorithm and demonstrating the applicability
of our algorithm on \mbox{large-scale} instances.

\section{Preliminaries}\label{sec:preliminaries}

Let $G=(V,E,p)$ be a directed acyclic network, where
\begin{compactitem}
\item $V$ is the set of vertices (with $|V|=n$),
\item $E$ is the set of edges (with $|E|=m$), and
\item $p:E\longrightarrow [0,1]$ is a function associating
  a failure probability with every edge of the network.
\end{compactitem}

We call a network $G$ satisfying these properties an acyclic
reliability network.
Furthermore, we fix two special vertices $s,t \in V$. Every
edge $e\in E$ fails with probability
$p(e)$ independently of the others. Let $G'$ be the
resulting graph over the intact edges after the realization of the edge failures.
We say that $G'$ is
intact if it contains a path from $s$ to $t$,
otherwise we say that $G'$ is in a failed state.
Finally, the $s$-$t$ reliability $\mathit{REL}_{s,t}(G)$  of the
network $G$ is defined as the probability of $G'$ being
intact.

Let $q:E\longrightarrow [0,1]$ be the function which
associates with every edge its probability of being intact,
i.e., $q(e)=1-p(e) \;\forall e\in E$. It is sometimes easier
to look at our reliability model in a different way where edges
appear rather than disappear. In this model we would begin
with an empty network and flip a biased coin for all
potential edges $e\in E$ to determine whether they appear.

We are mainly interested in $\epsilon-\delta$~approximations
for $\mathit{REL}_{s,t}(G)$, i.e., algorithms
returning an estimate of $\mathit{REL}_{s,t}(G)$ accurate to
within a relative error of $\epsilon$ with probability
at least $1-\delta$.
To determine how many samples we have to draw in a
\mbox{Monte-Carlo} algorithm to obtain an $\epsilon-\delta$~approximation
we often refer to the Generalized \mbox{Zero-One} Estimator Theorem
introduced by Dagum et al. in \cite{dagum_2000_optimal}. The theorem is repeated
below.

\begin{theorem}[Generalized Zero-One Estimator Theorem]\label{thm:generalized_01_thm}
Let $X$ be a random variable taking values in $[0,1]$ and let
$X_1,X_2,\dots,X_N$ denote independent random variables distributed
identically to $X$. If
$\epsilon<1$ and
$N\geq 4(e-2)\ln(2/\delta)\cdot \max\{\var[X],\epsilon\ex[X]\}/(\epsilon\ex[X])^2$
then
\begin{equation*}
\prob\left[(1-\epsilon)\ex[X]\leq\frac{1}{N}\sum_{i=1}^N X_i\leq(1+\epsilon)\ex[X]\right]>1-\delta\;.
\end{equation*}
\end{theorem}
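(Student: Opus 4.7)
The plan is to treat the two tails $\prob[\bar X_N\geq(1+\epsilon)\mu]$ and $\prob[\bar X_N\leq(1-\epsilon)\mu]$ separately (with $\mu:=\ex[X]$ and $\bar X_N:=\tfrac{1}{N}\sum_{i=1}^N X_i$) and combine them via a union bound, so it suffices to show each is at most $\delta/2$.

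For the upper tail I would apply the standard Chernoff bound with parameter $\lambda\in(0,1]$:
\begin{equation*}
\prob\bigl[S_N\geq(1+\epsilon)N\mu\bigr]\leq e^{-\lambda(1+\epsilon)N\mu}\,\ex[e^{\lambda X}]^N.
\end{equation*}
The key technical step is the pointwise estimate $e^{\lambda x}\leq 1+\lambda x+(e-2)\lambda^2 x^2$, valid for $\lambda,x\in[0,1]$: expand $e^{\lambda x}$ in its power series and note that for $k\geq 2$ one has $\lambda^k x^k\leq\lambda^2 x^2$ on $[0,1]\times[0,1]$, while $\sum_{k\geq 2}1/k!=e-2$. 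Taking expectations and then applying $1+y\leq e^y$ yields $\ex[e^{\lambda X}]\leq\exp\bigl(\lambda\mu+(e-2)\lambda^2\ex[X^2]\bigr)$, so the tail probability is at most $\exp\bigl(-N\lambda\epsilon\mu+N(e-2)\lambda^2\ex[X^2]\bigr)$; this is where the constant $(e-2)$ in the hypothesis on $N$ originates. I would then optimize $\lambda$ by a case split: in the large-variance regime $\var[X]\geq\epsilon\mu$, choose $\lambda$ proportional to $\epsilon\mu/\ex[X^2]$; in the small-variance regime $\var[X]<\epsilon\mu$, set $\lambda=\Theta(1)\in(0,1]$. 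Using $\ex[X^2]=\var[X]+\mu^2$ together with $\mu\leq 1$ and $\epsilon<1$, each case reduces the bound to
\begin{equation*}
\exp\!\Bigl(-\frac{N(\epsilon\mu)^2}{4(e-2)\max\{\var[X],\epsilon\mu\}}\Bigr),
\end{equation*}
which is at most $\delta/2$ precisely under the hypothesis on $N$. The lower tail is symmetric: applying Markov's inequality to $e^{-\lambda S_N}$ and noting that the alternating series $\sum_{k\geq 2}(-\lambda x)^k/k!$ is still bounded above (in the real sense) by $(e-2)\lambda^2 x^2$ produces the same MGF estimate and hence the identical bound.

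The main obstacle I expect is the careful tuning of $\lambda$ and the bookkeeping needed to convert the $\ex[X^2]$ coming naturally from the Taylor bound into the cleaner quantity $\max\{\var[X],\epsilon\ex[X]\}$ appearing in the statement, while preserving the sharp leading constant $4(e-2)$ rather than a slightly weaker one; this is precisely the point at which the two-regime analysis (variance-dominated vs.\ mean-dominated) is forced, as a single choice of $\lambda$ does not simultaneously give the optimal constant in both regimes.
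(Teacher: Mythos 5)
A preliminary remark: the paper does not prove this theorem at all --- it is quoted verbatim from Dagum et al.\ (the Generalized Zero-One Estimator Theorem of \cite{dagum_2000_optimal}) and used as a black box. So your attempt can only be measured against the standard Chernoff-type argument from that source, which your outline does resemble.

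There is, however, a genuine gap at exactly the point you wave off as ``bookkeeping.'' Writing $\mu=\ex[X]$, your pointwise estimate $e^{\lambda x}\leq 1+\lambda x+(e-2)\lambda^2x^2$ is correct and yields $\ex[e^{\lambda X}]\leq\exp\bigl(\lambda\mu+(e-2)\lambda^2\ex[X^2]\bigr)$, hence a tail exponent of $-N\lambda\epsilon\mu+N(e-2)\lambda^2\ex[X^2]$. But the best this can give, over \emph{all} choices of $\lambda$, is $\exp\bigl(-N(\epsilon\mu)^2/(4(e-2)\ex[X^2])\bigr)$, and $\ex[X^2]=\var[X]+\mu^2$ is \emph{not} dominated by $\max\{\var[X],\epsilon\mu\}$: take $X\equiv 0.9$ and $\epsilon=0.1$, so that $\var[X]=0$ and $\max\{\var[X],\epsilon\mu\}=0.09$ while $\ex[X^2]=0.81$. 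Your proposed fix for the small-variance regime ($\lambda=\Theta(1)$) fails outright there, since the exponent $-N\lambda\epsilon\mu+N(e-2)\lambda^2\mu^2$ is then positive; no tuning of $\lambda$ and no use of $\mu\leq 1$, $\epsilon<1$ repairs this. The missing idea is to \emph{center before exponentiating}: since $X-\mu\in[-1,1]$, the same truncated-series estimate (applied to $|\lambda(x-\mu)|\leq 1$) gives $\ex[e^{\pm\lambda(X-\mu)}]\leq 1+(e-2)\lambda^2\var[X]\leq\exp\bigl((e-2)\lambda^2\var[X]\bigr)$ for $0<\lambda\leq 1$, because the first-order term now vanishes in expectation. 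The tail exponent becomes $-N\lambda\epsilon\mu+N(e-2)\lambda^2\var[X]$, and the single choice $\lambda=\epsilon\mu/\bigl(2(e-2)\max\{\var[X],\epsilon\mu\}\bigr)\leq 1/(2(e-2))<1$ gives, using $\var[X]\leq\max\{\var[X],\epsilon\mu\}$, the bound $\exp\bigl(-N(\epsilon\mu)^2/(4(e-2)\max\{\var[X],\epsilon\mu\})\bigr)\leq\delta/2$ for each tail under the stated hypothesis on $N$ --- so the case split you anticipate is in fact an artifact of not centering. Your overall skeleton (two tails plus union bound, the constant $e-2$ from $\sum_{k\geq 2}1/k!$, the appearance of $\max\{\var[X],\epsilon\mu\}$) is right, but as written the uncentered moment-generating-function bound does not prove the theorem.
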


We typically use the theorem in the following form. If
$N\geq 4(e-2)\ln(2/\delta)\cdot
(1/\epsilon^2\ex[X])$ then $(\sum_{i=1}^N X_i)/N$ is an
$\epsilon-\delta$~approximation for $\ex[X]$ (using the fact
that a random variable $X$ taking values in $[0,1]$
satisfies $\ex[X]\geq\var[X]$).

\section{A direct Monte-Carlo approach}\label{sec:direct_montecarlo}

In this section we consider a simple \mbox{Monte-Carlo} approach and
show that it is efficient for sufficiently large values of
$\mathit{REL}_{s,t}(G)$, but inefficient for the estimation of small
reliabilities.
In this approach we simply flip a biased coin for every edge $e\in E$
and observe whether $s$ and $t$ are connected in the resulting graph. Let
$X$ be the random variable corresponding to this approach
where $X=1$ if the resulting network is intact and $0$ otherwise.
The random variable $X$ has thus a Bernoulli
distribution with parameter $\mathit{REL}_{s,t}(G)$
and the reliability is estimated without bias by 
generating $N$ independent realizations of $X$ and returning their
empirical mean which we denote by $Y_N$. 
A central question when using this method is how
large $N$ has to be chosen to obtain an $\epsilon-\delta$~approximation.
A direct application of Theorem~\ref{thm:generalized_01_thm}
gives the following:

\begin{theorem}\label{thm:montecarlo_approx}
$Y_N$ is an $\epsilon-\delta$~approximation of
$\mathit{REL}_{s,t}(G)$ if $N$ satisfies
\begin{equation*}
N\geq
4(e-2)\ln\left(\frac{2}{\delta}\right)\cdot\frac{1}{\epsilon^2
\mathit{REL}_{s,t}(G)}\;.
\end{equation*}
\end{theorem}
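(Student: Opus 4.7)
The plan is to recognize this bound as an immediate specialization of the Generalized Zero-One Estimator Theorem (Theorem~\ref{thm:generalized_01_thm}), applied to the Bernoulli random variable $X$ underlying the direct Monte-Carlo scheme. So the argument will be very short, and the work consists mainly in matching the hypotheses of the general theorem to the present setting.

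First I would observe that $X$ takes values in $\{0,1\} \subseteq [0,1]$, and that by construction $\ex[X] = \prob[X=1] = \mathit{REL}_{s,t}(G)$. This places us squarely in the regime where Theorem~\ref{thm:generalized_01_thm} applies, with $Y_N = (1/N)\sum_{i=1}^N X_i$ playing the role of the empirical mean. The $X_i$ are i.i.d.\ copies of $X$ by the independence of the edge failures across repeated trials.

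Next I would invoke the simplified form of the theorem noted in Section~\ref{sec:preliminaries}: since any random variable $X \in [0,1]$ satisfies $\var[X] \le \ex[X](1-\ex[X]) \le \ex[X]$, the maximum in the general bound simplifies to $\max\{\var[X],\epsilon\ex[X]\} \le \ex[X]$ (as $\epsilon<1$), and the sample-size condition collapses to $N \ge 4(e-2)\ln(2/\delta)/(\epsilon^2\ex[X])$. Substituting $\ex[X]=\mathit{REL}_{s,t}(G)$ yields exactly the bound in the statement, and the theorem then guarantees that $Y_N$ lies within a factor $(1\pm\epsilon)$ of $\mathit{REL}_{s,t}(G)$ with probability at least $1-\delta$, which is precisely the $\epsilon$--$\delta$ approximation property.

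There is no real obstacle here; the only thing to be careful about is citing the correct simplified version of Theorem~\ref{thm:generalized_01_thm} and explicitly noting why the variance term can be dropped (so that the bound depends only on $\ex[X]$ rather than on a separately estimated variance). Once that is done, the theorem follows in one line.
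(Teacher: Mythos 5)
Your proof is correct and follows exactly the paper's route: the paper proves this theorem by a direct application of Theorem~\ref{thm:generalized_01_thm} to the Bernoulli variable $X$ with $\ex[X]=\mathit{REL}_{s,t}(G)$, using the simplification $\max\{\var[X],\epsilon\ex[X]\}\leq\ex[X]$ noted in the preliminaries. Nothing is missing.
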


When $\mathit{REL}_{s,t}(G)$ is bounded below by
$1/poly(n,m)$, $Y_N$ is an FPRAS ($\epsilon-\delta$~approximation algorithm
with running time bounded by a polynomial in $1/\epsilon,\log(1/\delta)$ and the
input size). The difficult case is the estimation of small values, i.e.,
small reliabilities $\mathit{REL}_{s,t}(G)$. This problem
motivated the construction of the algorithm to be presented next.

\section{Monte-Carlo method for estimating small reliabilities}\label{sec:our_algorithm}

The backbone of our algorithm is an
adaption of the \mbox{Monte-Carlo} method presented in
\cite{karp_1985_montecarlo}.
Our algorithm exploits that, in a directed acyclic network,
we can easily (in linear time) calculate
the mean number of intact paths from $s$ to $t$ after the edge failures.
This value is normally a good estimate for the reliability
in highly unreliable networks as in such networks an intact
state typically contains only
a few paths from $s$ to $t$. Using a \mbox{Monte-Carlo} approach,
our algorithm then estimates the ratio between
$\mathit{REL}_{s,t}(G)$ and the mean number of intact paths from $s$ to $t$ after edge
failures. Multiplying this estimate
with the mean number of intact paths from $s$ to $t$ yields
finally an estimate for the $s$-$t$ reliability of the network.

Note that the ratio between the mean number of intact paths from $s$ to $t$ after
edges failures and the reliability to estimate, i.e., the reciprocal of the value we
estimate in our algorithm, is exactly the mean number of
intact paths from $s$ to $t$ after edge failures conditioned on the event that
the network will be intact after the failure process. One of the main problems for
the development of methods for directly estimating this ratio is the difficulty
of choosing a sample out of the pool of intact states with probability proportional to
its real appearance probability. In fact such a sampling procedure could easily be
transformed into an FPRAS for the estimation of $\mathit{REL}_{s,t}(G)$
using techniques presented by Jerrum et al. \cite{jerrum_1986_random}.

\subsection{Notation}
Let $H=\{h\subseteq E\}$ be the set of all
possible states of the network $G$ after the realization of the edge failures,
where a state $h\in H$ represents the collection of intact
edges.
Furthermore, let $A\subset H$ be the set of all intact states.
For every state $h\in H$
we denote by $w(h)$ the probability that $h$ occurs after
the edge failure process, i.e., $w(h)=\prod_{e\in h}q(e)\prod_{e\in E\setminus h}p(e)$\;.
In particular, the weight of the set $A$ is the reliability
we want to estimate, $w(A)=\sum_{a\in A}w(a)=\mathit{REL}_{s,t}(G)$.

Let $\mathcal{P}$ be the set of all paths in
$G$ from $s$ to $t$. A path $\gamma$ is simply represented
by a subset of the edges $E$. For every state $h\in H$ we
denote by $\mathcal{P}(h)$ the set
of all paths from $s$ to $t$ in state $h$. A state $h\in H$
is intact if and only if we have
$\mathcal{P}(h)\neq \emptyset$.
With every path $\gamma \in \mathcal{P}$
we associate a weight $w(\gamma)$ which is the probability
that all edges on the path will be intact after the edge
failure process, $w(\gamma)=\prod_{e\in \gamma} q(e)$\;.

The mean number of intact paths from $s$ to $t$
after edge failures can thus be
written as $\sum_{\gamma \in \mathcal{P}} w(\gamma)$. In
Section~\ref{sec:technical_details} we will see how this quantity
can be efficiently calculated in acyclic graphs.

\subsection{Estimating the ratio between
\texorpdfstring{{\mathversion{bold}%
$\mathit{REL}_{s,t}(G)$}}{$\mathit{REL}_{s,t}(G)$} and the mean number of intact paths from
\texorpdfstring{{\mathversion{bold}$s$} to {\mathversion{bold}$t$}}{$s$ to $t$} after edge
failures}


To estimate the ratio between the reliability and the mean number
of intact paths from $s$ to $t$ after edge failures we
sample out of the sample space
\begin{equation*}
\Omega = \{(\gamma, a)\in\mathcal{P}\times A \mid \gamma
\subseteq a\},
\end{equation*}
where we associate the weight $w(\gamma,a)=w(a)$ with every element $(\gamma,a)\in\Omega$.
This sample space has the following
interesting properties. On the one hand, the weight of the sample
space is exactly the mean number of intact paths from $s$ to $t$
after edge failures, i.e.,
\begin{equation*}
w(\Omega)=\underset{(\gamma,a)\in\Omega}{\sum}w(\gamma,a)
=\underset{\gamma\in \mathcal{P}}{\sum}\;
\underset{w(\gamma)}{\underbrace{\underset{a\in A: \gamma \subseteq a}{\sum}w(\gamma,a)}}
=\underset{\gamma \in \mathcal{P}}{\sum}w(\gamma).
\end{equation*}
On the other hand, it is easy to sample
elements of $\Omega$ with probability proportional to their
weights by the following two-step procedure.
In a first step, a path $\gamma \in\mathcal{P}$ is drawn
with probability proportional to its weight $w(\gamma)$. How this
can be done efficiently will be discussed in
Section~\ref{sec:technical_details}. In the second step,
all edges not contained in $\gamma$ are sampled corresponding to their
appearance probabilities. The appeared edges of the second step together
with the edges in $\gamma$ form an intact state $a\in A$. The tuple
$(\gamma,a)$ is finally the sampled state.

We will now introduce influence values for the elements in $\Omega$
such that the expected influence value of a sample of $\Omega$ is
equal to $w(A)/w(\Omega)$. The expected influence value
can then be estimated by a standard \mbox{Monte-Carlo} algorithm.

The approach taken in \cite{karp_1985_montecarlo} was to fix for every intact
state $a\in A$ an arbitrary $s$-$t$ path $\gamma_a\in
\mathcal{P}(a)$. We begin by following this approach and
introduce later on another idea to reduce the
variance of the estimator. With every sample we associate an
influence value which
is equal to one if the sample is of the form $(\gamma_a,a)$
and equal to zero otherwise. The influence value of a sample
corresponds therefore
to the realization of a Bernoulli variable with parameter
\begin{equation*}
\frac{w(\{(\gamma_a,a) \in \Omega \mid a\in A\})}{w(\Omega)}=\frac{w(A)}{w(\Omega)}
\end{equation*}
which is precisely the value we would like to estimate. By repeating the sampling procedure
$N$ times and counting the fraction of samples of the
form $(\gamma_a,a)$ we therefore obtain an unbiased estimator $\xi_N$
for $w(A)/w(\Omega)$ with variance
\begin{equation*}
\var(\xi_N)=\frac{1}{N}\frac{w(A)}{w(\Omega)}\left(1-\frac{w(A)}{w(\Omega)}\right)\;.
\end{equation*}

Another unbiased estimator $\psi_N$ for $w(A)/w(\Omega)$
with smaller variance than $\xi_N$ can be
obtained by associating an influence value
$m(\gamma,a)=1/|\mathcal{P}(a)|$ with every sample
$(\gamma,a)$, and we define
$\psi_N$ to be simply the mean of the influence values
of our samples. This
approach is particularly interesting in our case as the fact
of $G$ being acyclic allows to compute the value
$|\mathcal{P}(a)|$ in linear time (see Section~\ref{sec:technical_details})
and thus does not increase the overall worst-case complexity of our
algorithm. The reduction of the variance decreases the expected number
of iterations to perform for obtaining an $\epsilon-\delta$ approximation
when applying a stopping criterion as presented in
Section~\ref{sec:nb_samples}.

Unfortunately, we cannot guarantee some minimal decrease of the variance
of the estimator $\psi_N$ compared to $\xi_N$ as there are instances
of reliability networks where we have an arbitrarily low decrease.
On the other hand, it is easy to find instances where the variance
reduces by an arbitrarily high factor.

We finally estimate $\mathit{REL}_{s,t}(G)$ by multiplying $\psi_N$ by
the weight of the sampling space ($w(\Omega)$) which can be calculated
efficiently as shown in the next section.  Algorithm \ref{alg:montecarlo_psi}
gives a pseudocode for an implementation of the $s$-$t$ reliability
estimation based on $\psi_N$.
Note that the sampling of the edges not on the path performed in
Algorithm~\ref{alg:montecarlo_psi} in lines
\ref{algline:sampling_remaining_edges} to \ref{algline:end_sampling} can
be done more efficiently as in general we do not need to sample all edges
for calculating $|\mathcal{P}(a)|$.
In Section~\ref{sec:technical_details} we will see how this part of the algorithm can
be improved and show how to perform the remaining unspecified parts of our algorithm.
More precisely, we will discuss the following operations, where $\kappa$ represents the
time needed for generating a random number uniformly in $[0,1]$:
\begin{compactitem}
\item Sampling in $O(m\kappa)$ time a path according to line~\ref{algline:path_sampling} of
Algorithm~\ref{alg:montecarlo_psi},
\item Determining in $O(m)$ time the weight $w(\Omega)$ of the sampling space $\Omega$ used in
line~\ref{algline:w_omega} of Algorithm~\ref{alg:montecarlo_psi},
\item Sampling edges not being on the initial path $\gamma$
and calculating $|\mathcal{P}(a)|$ as needed in
lines~\ref{algline:sampling_remaining_edges} to
\ref{algline:multiplicity} of Algorithm~\ref{alg:montecarlo_psi}.
\end{compactitem}

\linesnumbered
\begin{algorithm}[H]
\SetLine
\caption{Estimation of $\mathit{REL}_{s,t}(G)$
based on $\psi_N$\label{alg:montecarlo_psi}}
\SetKwInOut{Input}{Input}
\SetKwInOut{Output}{Output}
\Input{An acyclic reliability network $G=(V,E,p)$ with two special
  terminals $s,t \in V$ and the number $N$ of iterations to perform}
\Output{An estimate for $\mathit{REL}_{s,t}(G)$}
$x=0$\;
\For{$i=1$ to $N$}%
{
$a=\emptyset$\;
Draw a random path $\gamma$ out of the set $\mathcal{P}$
with probability proportional to $w(\gamma)$\label{algline:path_sampling}\;
$a=a\cup \gamma$\;
\ForEach{$e\in E\setminus \gamma$}%
{\label{algline:sampling_remaining_edges}
  Take a sample $x_e$ of a Bernoulli variable with parameter
  $q(e)$ to determine whether the edge $e$ appears\;
  \If{$x_e=1$ (the edge appears)}%
  {
    $a=a\cup e$\;
  }
}\label{algline:end_sampling}
Determine $|\mathcal{P}(a)|$\label{algline:multiplicity}\;
$x=x+\frac{1}{|\mathcal{P}(a)|}$\;
}
$\psi_N=\frac{x}{N}$\;
Determine $w(\Omega)$\label{algline:w_omega}\;
\Return{$\psi_N\cdot w(\Omega)$}
\end{algorithm}

\section{Algorithmic details}\label{sec:technical_details}

\subsection{Sampling \texorpdfstring{{\mathversion{bold}$s$}-{\mathversion{bold}$t$}}{$s$-$t$} paths}

We begin by studying how paths can be efficiently sampled
according to line \ref{algline:path_sampling} of Algorithm~\ref{alg:montecarlo_psi}.
The idea is to start at the terminal
$s$ and then to construct a path to $t$ by successively
adding new edges. The choice of a new edge augmenting the
current partial path is done in the following way. With every
edge $(v,u)\in E$ we associate a weight $\widetilde{w}(v,u)$
which is the sum of the weights of all paths from $v$ to $t$
using edge $(v,u)$, i.e.,
\begin{equation}\label{eq:widetilde_w}
\widetilde{w}(v,u)=
\sum_{\begin{subarray}{c}
\gamma: \text{ path from $v$ to $t$}\\
    \text{with }(v,u)\in\gamma
\end{subarray}}
\;\;\underset{e\in\gamma}{\prod}q(e)\;.
\end{equation}

During the path sampling method, after a partial path from
$s$ to some vertex $v$ is constructed, we choose an
outgoing edge of vertex $v$ with probability
proportional to the weights $\widetilde{w}$. It is easy to
verify that this procedure effectively samples a path
$\gamma\in\mathcal{P}$ with probability proportional to
$w(\gamma)$ as desired. Furthermore, the edge weights
$\widetilde{w}$ can be easily computed by the following
procedure.

We suppose without loss of generality that every edge lies
on at least one path from $s$ to $t$ in $G$. All edges not
satisfying this condition can be eliminated in $O(m)$ time
in a preprocessing step. We then determine a
topological order of the vertices. By the condition mentioned above
it is clear that $t$ will be the last vertex in
the topological order. We go through the vertices in reverse topological order
and determine at each step the weights $\widetilde{w}$ of the
edges entering the current vertex. At the first step we look
at every edge $e$ incident to $t$ and determine its weight
$\widetilde{w}(e)$ which is equal to $q(e)$. The weights of
the other edges can then be determined in linear time
by using the following recursive
formula which follows from the definition of the weights
$\widetilde{w}$ in (\ref{eq:widetilde_w}),

\begin{equation}\label{eq:recursive_widetilde_w}
\widetilde{w}(v,u)=q(v,u)\cdot \underset{e\in\delta^+(u)}{\sum}\widetilde{w}(e) \, ,
\end{equation}
\noindent
where for $v\in V$, $\delta^+(v)$ (resp. $\delta^-(v)$) denotes the set of all
edges going out of $v$ (resp. all edges entering $v$).

\subsection{Determining \texorpdfstring{{\mathversion{bold}$w(\Omega)$}}{$w(\Omega)$}}

Another subproblem in Algorithm~\ref{alg:montecarlo_psi} is the
calculation of $w(\Omega)$, the weight of the sample
space. Having already calculated the weight function
$\widetilde{w}$, this problem can easily be solved by
expressing $w(\Omega)$ in terms of $\widetilde{w}$
as follows.

\begin{equation*}
w(\Omega)=\sum_{\gamma \in
  \mathcal{P}}w(\gamma)=\sum_{e\in\delta^+(s)}\;%
\underset{=\widetilde{w}(e)}{\underbrace{\sum_{%
\substack{\gamma\in\mathcal{P}\\e\in\gamma}}w(\gamma)}}=%
\sum_{e\in\delta^+(s)}\widetilde{w}(e)
\end{equation*}

\subsection{Sampling edges outside the initial path and calculating
\texorpdfstring{{\mathversion{bold}$|\mathcal{P}(a)|$}}{$|\mathcal{P}(a)|$}}\label{subsec:sampling_remaining_edges}

As the sampling of the remaining edges in
lines~\ref{algline:sampling_remaining_edges} to
\ref{algline:end_sampling} of Algorithm~\ref{alg:montecarlo_psi}
is used only for the calculation of $|\mathcal{P}(a)|$, we do not
have to know all intact edges but only those on a
path from $s$ to $t$. One way of improving the procedure is to
sample only edges that can be reached from $s$. This can be done
by keeping track of the set of nodes $\mathcal{L}$ that
are currently reachable from
$s$. At the beginning, these are the nodes on the initial path
$\gamma$. Then all edges going out of $\mathcal{L}$ will be
sampled, and for every appeared edge we add its endpoint to
$\mathcal{L}$. This procedure is repeated until there are
no edges outgoing from $\mathcal{L}$ that have not already been sampled.

Finally, the determination of $|\mathcal{P}(a)|$ in
line~\ref{algline:multiplicity} of Algorithm~\ref{alg:montecarlo_psi}
can be easily done by an
analogue technique to the one used to calculate $w(\Omega)$
where this time we work on the subgraph of $G$ over the edges in
$a$ where we give a weight of one to every edge.

\section{Number of samples to draw}\label{sec:nb_samples}

In this section we analyze how many samples have to be drawn
in our algorithm to obtain an $\epsilon-\delta$ approximation.

\subsection{A priori bounds}

Using Theorem~\ref{thm:generalized_01_thm} we can derive the following bound.

\begin{theorem}\label{thm:montecarlo_importance_approx}
If the number of samples $N$ satisfies
\begin{equation*}
N\geq 4(e-2)\ln\left(\frac{2}{\delta}\right) \frac{w(\Omega)}{\epsilon^2 w(A)}
\end{equation*}
then $w(\Omega)\cdot\xi_N$ and $w(\Omega)\cdot \psi_N$ are both
$\epsilon-\delta$ approximations of $\mathit{REL}_{s,t}(G)$.
\end{theorem}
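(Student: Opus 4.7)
The plan is to reduce the statement to a direct application of Theorem~\ref{thm:generalized_01_thm} in its simplified form (the remark following the theorem). Both $\xi_N$ and $\psi_N$ are empirical means of $N$ i.i.d.\ samples of $[0,1]$-valued random variables whose common expectation is $w(A)/w(\Omega)$, so the task is really just to check these two properties and then scale the resulting approximation by the constant $w(\Omega)$.

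First I would fix a single random variable $X$ representing the influence value of one draw from $\Omega$ (once for the $\xi$-variant, once for the $\psi$-variant) and verify that $X \in [0,1]$. For $\xi$, $X = \mathbf{1}[(\gamma,a) = (\gamma_a, a)]$ is a Bernoulli variable, clearly in $\{0,1\}$. For $\psi$, $X = 1/|\mathcal{P}(a)|$, and since any $(\gamma, a) \in \Omega$ has $\gamma \in \mathcal{P}(a)$, we have $|\mathcal{P}(a)| \geq 1$, so $X \in (0,1]$. Next I would record that $\mathbb{E}[X] = w(A)/w(\Omega)$ in both cases; for $\xi$ this was essentially done in the text, and for $\psi$ it follows from
\begin{equation*}
\mathbb{E}[X] = \sum_{(\gamma,a)\in\Omega}\frac{w(\gamma,a)}{w(\Omega)}\cdot\frac{1}{|\mathcal{P}(a)|}
= \frac{1}{w(\Omega)}\sum_{a\in A} w(a) \sum_{\gamma\in\mathcal{P}(a)}\frac{1}{|\mathcal{P}(a)|}
= \frac{w(A)}{w(\Omega)}.
\end{equation*}

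With these two facts in hand, I apply the simplified version of Theorem~\ref{thm:generalized_01_thm} stated right after it: since $X$ takes values in $[0,1]$, we have $\mathrm{Var}[X] \leq \mathbb{E}[X]$, and the hypothesis
\begin{equation*}
N \geq 4(e-2)\ln(2/\delta) \cdot \frac{1}{\epsilon^2 \mathbb{E}[X]} = 4(e-2)\ln(2/\delta) \cdot \frac{w(\Omega)}{\epsilon^2 w(A)}
\end{equation*}
guarantees that the empirical mean (which is exactly $\xi_N$ or $\psi_N$) is an $\epsilon$-$\delta$ approximation of $w(A)/w(\Omega)$. Finally, multiplying both sides of the sandwich inequality $(1-\epsilon)\mathbb{E}[X] \leq \xi_N \leq (1+\epsilon)\mathbb{E}[X]$ (or the analogue for $\psi_N$) by the deterministic positive constant $w(\Omega)$ turns the approximation guarantee into one for $w(A) = \mathit{REL}_{s,t}(G)$, which is what is claimed.

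There is really no hard step here; the substantive content — the definitions of $\Omega$, the $[0,1]$-boundedness of the influence values, and the identification of their common mean with $w(A)/w(\Omega)$ — was already assembled in Section~\ref{sec:our_algorithm}. The only mild point to double-check is that the same $N$ works for \emph{both} estimators, but this is immediate because the bound in Theorem~\ref{thm:generalized_01_thm} depends on $X$ only through $\mathbb{E}[X]$ (and an upper bound on $\mathrm{Var}[X]$ of the form $\mathbb{E}[X]$), and $\mathbb{E}[X]$ is the same for both $\xi$ and $\psi$.
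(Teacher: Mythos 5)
Your proof is correct and matches the paper's (implicit) argument: the paper presents this theorem as a direct consequence of Theorem~\ref{thm:generalized_01_thm}, relying on exactly the facts you verify --- that $\xi_N$ and $\psi_N$ are empirical means of i.i.d.\ $[0,1]$-valued influence values with common expectation $w(A)/w(\Omega)$, so the simplified form of the Zero-One Estimator Theorem applies and the guarantee transfers to $\mathit{REL}_{s,t}(G)$ after multiplying by the constant $w(\Omega)$. Nothing is missing; your explicit computation of $\ex[1/|\mathcal{P}(a)|]=w(A)/w(\Omega)$ is the same identity the paper records at the start of its Appendix.
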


In Subsection~\ref{subsec:ratio_bounds}, we discuss upper bounds
for the ratio $w(\Omega)/w(A)$ which allow to apply
Theorem~\ref{thm:montecarlo_importance_approx} in practice and
to formulate conditions a network has to satisfy under which
Algorithm~\ref{alg:montecarlo_psi} is an FPRAS for
estimating $\mathit{REL}_{s,t}(G)$.

\subsection{Stopping criteria}

In practice, the use of a stopping
criterion typically allows to reduce the number of samples to
take as
we can profit from information gained during the execution
of the algorithm. Furthermore, we do not suffer from a possibly
weak upper bound for $w(A)/w(\Omega)$.
In \cite{dagum_2000_optimal}, Dagum et al. present two stopping criteria,
applicable to our algorithm and ensuring that
the result of the algorithm is an $\epsilon-\delta$ approximation
of $\mathrm{REL}_{s,t}(G)$. In the computational results, the second
algorithm (named \textit{Approximation Algorithm $\mathcal{AA}$}) is
used.

\subsection{Bounding the ratio \texorpdfstring{{\mathversion{bold}$w(\Omega)/w(A)$}}{$w(\Omega)/w(A)$}}\label{subsec:ratio_bounds}

In this subsection, we discuss upper bounds on the ratio
$w(\Omega)/w(A)$. Together with Theorem~\ref{thm:montecarlo_importance_approx},
these bounds allow us to bound the
number of iterations needed for our algorithm to deliver an
$\epsilon-\delta$ approximation.
The bounds discussed in this
section are correct not only for the case of directed acyclic
reliability networks but also for the more general case
of arbitrary directed, undirected or
even mixed reliability networks.

\paragraph{Previous results}

Karp an Luby \cite{karp_1985_montecarlo} gave the following upper bound on
the ratio $w(\Omega)/w(A)$ (in a slightly different context).

\begin{theorem}\label{thm:karp_bound_multiplicities}
\begin{equation*}
\frac{w(\Omega)}{w(A)}\leq \prod_{e\in E}(1+q(e))\;.
\end{equation*}
\end{theorem}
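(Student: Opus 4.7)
The plan is to perform a change of variables on the edge probabilities. Set $\tilde q(e) = q(e)/(1+q(e))$, so that $1-\tilde q(e) = 1/(1+q(e))$. Dividing the target inequality through by $\prod_{e\in E}(1+q(e))$, the claim becomes equivalent to
\[
\sum_{\gamma\in\mathcal{P}}\,\prod_{e\in\gamma}\tilde q(e)\,\prod_{e\notin\gamma}(1-\tilde q(e))\;\le\;w(A),
\]
because the rescaled per-path weight $w(\gamma)/\prod_{e\in E}(1+q(e))$ factors exactly as $\prod_{e\in\gamma}\tilde q(e)\prod_{e\notin\gamma}(1-\tilde q(e))$.

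The next step is to read the left-hand side as a probability: if $B\subseteq E$ is a random subset in which each edge $e$ is included independently with probability $\tilde q(e)$, then the summand is $\prob(B=\gamma)$, and the full sum is $\prob(B\in\mathcal{P})$, i.e.\ the probability that the random set $B$ is itself an $s$-$t$ path. Since a path trivially contains a path (itself), we have the set inclusion $\{B\in\mathcal{P}\}\subseteq\{B\text{ is intact}\}$, and hence $\prob(B\in\mathcal{P})\le w_{\tilde q}(A)$, where $w_{\tilde q}(A)$ denotes the reliability computed with the reduced intactness probabilities $\tilde q$. Finally, since $\tilde q(e)\le q(e)$ edge by edge and the intactness event is increasing in the set of intact edges, a standard monotone coupling yields $w_{\tilde q}(A)\le w(A)$, closing the chain.

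The only non-obvious step, and therefore the main conceptual obstacle, is spotting the substitution $\tilde q=q/(1+q)$ that turns the rescaled summand into the point mass of an explicit product distribution over subsets of $E$; once this is in place, the proof reduces to the elementary set containment $\{B\in\mathcal{P}\}\subseteq\{B\text{ intact}\}$ and a routine monotone coupling between the two edge-probability vectors.
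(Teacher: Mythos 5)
Your proof is correct. Note that the paper itself does not prove Theorem~\ref{thm:karp_bound_multiplicities}; it only quotes it from Karp and Luby, and your tilting argument (replace $q$ by $\tilde q = q/(1+q)$, observe that the rescaled sum over paths is exactly $\prob(B\in\mathcal{P})$ for the product measure with marginals $\tilde q$ since the events $\{B=\gamma\}$ are pairwise disjoint, then use $\{B\in\mathcal{P}\}\subseteq\{B\text{ intact}\}$ and monotonicity in the edge probabilities) is essentially their original argument, transposed from cutsets to $s$-$t$ paths. All three steps check out, including the monotone coupling, since $\tilde q(e)\le q(e)$ and intactness is an increasing event.
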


When combining the above theorem with
Theorem~\ref{thm:montecarlo_importance_approx} we obtain that
Algorithm~\ref{alg:montecarlo_psi} is an FPRAS if
$\prod_{e\in E}(1+q(e))$ is bounded by a polynomial in the input size
of the reliability network $G$.
In the special case of uniform edge failure probabilities,
i.e., $p(e)=\overline{p}= 1-\overline{q} \;\forall e\in E$,
Theorem~\ref{thm:karp_bound_multiplicities} reduces to
\begin{equation}\label{eq:karp_bound_multiplicities_uniform}
\frac{w(\Omega)}{w(A)}\leq (1+\overline{q})^m
\end{equation}
and implies that if
\begin{equation}\label{eq:karp_bound_q}
\overline{q}=O(\log(m)/m)
\end{equation}
then Algorithm~\ref{alg:montecarlo_psi} is an FPRAS for
estimating $\mathit{REL}_{s,t}(G)$.

\paragraph{Improved bound in the case
  of uniform edge failure probability}\label{sec:bound}

We now give a new bound on
$w(\Omega)/w(A)$ for the case of uniform edge failure probabilities,
which is sharper than (\ref{eq:karp_bound_multiplicities_uniform}),
especially in the case when the reliability network $G$ is
not too dense and does not contain long paths from $s$
to $t$.

To quantify the sparsity
of a graph we introduce the notion of \textit{\mbox{edge-vertex} bound}.
We say that a graph $G=(V,E)$
has an \mbox{edge-vertex} bound of $\mu$ if for any subset of the
vertices $U\subseteq V$ we have that there are at most $\mu
|U|$ edges in the subgraph of $G$ induced by the vertices
$U$. The best \mbox{edge-vertex} bound of a graph can be determined
in polynomial time by reduction to a flow problem \cite{gallo_1989_fast}. Furthermore,
$\Delta_{max}/2$, where $\Delta_{max}$ is the maximum degree of the
vertices in $G$, can be used as a simple valid \mbox{edge-vertex} bound.
Our new bound is given by the following theorem (see Appendix for a proof).

\begin{theorem}\label{thm:improved_bound}
Let $G=(V,E,\overline{p})$ be a reliability network with
uniform edge failure probability $\overline{p}=1-\overline{q}$,
\mbox{edge-vertex} bound $\mu$ and let $l_{min}$ respectively $l_{max}$ be the minimal and maximal length
of any $s$-$t$ path in $G$. Then we
have
\begin{equation*}
\frac{w(\Omega)}{w(A)}\leq%
\min\left\{\left(\frac{2}{2-\overline{q}}\right)^{m-l_{min}},%
2^{1+\frac{\mu}{m}\left[(\overline{q}m+%
\ln(2))(\overline{q}m+\ln(2)+l_{max})\right]}\right\}.
\end{equation*}
\end{theorem}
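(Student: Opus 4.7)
The plan is to establish the two arguments of the $\min$ separately. The common starting point is the identity
\begin{equation*}
\frac{w(\Omega)}{w(A)}=\ex[\,|\mathcal{P}(h)|\mid h\in A\,],
\end{equation*}
where $h$ denotes the random state, which follows from $w(\Omega)=\sum_{a\in A}w(a)\,|\mathcal{P}(a)|$ (already implicit in the paper). Both bounds are therefore upper bounds on this conditional expected number of $s$-$t$ paths.

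For the first bound $(2/(2-\overline{q}))^{m-l_{min}}$, I would fix a shortest $s$-$t$ path $\gamma_0$ of length $l_{min}$. The crude lower bound $w(A)\geq \overline{q}^{\,l_{min}}$ comes from restricting the sum defining $w(A)$ to states that contain $\gamma_0$. For the numerator $w(\Omega)=\sum_\gamma \overline{q}^{\,|\gamma|}$, I would argue edge by edge over the $m-l_{min}$ edges of $E\setminus\gamma_0$: switching on such an edge $e$ can create new $s$-$t$ paths, but each new path picks up an extra factor $\overline{q}=q(e)$. I expect this to translate into a per-edge multiplicative factor of at most $1+\overline{q}/(2-\overline{q})=2/(2-\overline{q})$ on the ratio $w(\Omega)/w(A)$, so that the product of the $m-l_{min}$ factors yields the claim. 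Making this rigorous will most likely require an induction on $|E\setminus\gamma_0|$ combined with a generating-function identity isolating the dependence on a single extra edge, or alternatively an FKG-type correlation inequality between the events ``$\gamma$ intact'' for distinct paths $\gamma$. A quick sanity check on $k$ parallel edges confirms that the ratio is exactly $(2/(2-\overline{q}))^{k-1}$, which matches the claimed bound tightly.

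For the second bound, the shape of the exponent together with the prefactor $2^1$ strongly suggest a truncation plus Chernoff argument. I would split
\begin{equation*}
\ex[|\mathcal{P}(h)|]=\ex[|\mathcal{P}(h)|\cdot\mathbf{1}\{|h|\leq K\}]+\ex[|\mathcal{P}(h)|\cdot\mathbf{1}\{|h|>K\}]
\end{equation*}
with $K=\overline{q}m+\ln 2$, so that a standard MGF/Chernoff bound on the binomial $|h|\sim\mathrm{Bin}(m,\overline{q})$ gives $\prob[|h|>K]\leq 1/2$; this is the origin both of the leading factor $2$ and of the $+\ln 2$ offset appearing inside the exponent. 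On the event $|h|\leq K$, the edge-vertex bound $\mu$ is used to bound $|\mathcal{P}(h)|$: since the intact subgraph has at most $K$ edges and every $s$-$t$ path has length at most $l_{max}$, a layer-by-layer count along a topological ordering (restricted to edges actually reachable from $s$) produces an estimate of roughly the form $2^{(\mu/m)K(K+l_{max})}$ on $|\mathcal{P}(h)|$, which together with the truncation reproduces the claimed exponent.

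The hardest step will be the combinatorial path-count estimate in the second bound: the edge-vertex bound $\mu$ is only an amortized sparsity condition, so one cannot just multiply per-vertex degrees. The argument must either iterate the edge-vertex inequality along prefixes of a topological order of the intact subgraph, or encode each $s$-$t$ path by a small, carefully chosen subset of its edges whose cardinality can be bounded using $\mu|U|$ for appropriate vertex sets $U$. Getting the exponent quadratic in $\overline{q}m+\ln 2$ (rather than, say, $K l_{max}$ alone) is what will require combining the length constraint $l_{max}$ with the truncation parameter $K$ inside the same counting step.
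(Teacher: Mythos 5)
Your starting identity $w(\Omega)/w(A)=\ex[\,|\mathcal{P}(h)|\mid h\in A\,]$ is correct, but trying to bound this unbounded quantity from above is where the proposal goes wrong. The paper works with the reciprocal: sampling $(\gamma,a)$ from $\Omega$ (path proportional to $w(\gamma)$, then independent extra edges) one has $\ex[1/|\mathcal{P}(a)|]=w(A)/w(\Omega)$, a $[0,1]$-valued expectation to be bounded from \emph{below}, which is what makes truncation arguments safe later. The combinatorial ingredient your first-bound sketch never supplies is the deterministic inequality $|\mathcal{P}(a)|\leq 2^{|a\setminus\gamma|}$ for any $s$-$t$ path $\gamma\subseteq a$ (proved by a symmetric-difference/cycle argument); since $|a\setminus\gamma|\sim\mathrm{Bin}(m-l,\overline{q})$ given $|\gamma|=l$, this yields $\ex[1/|\mathcal{P}(a)|]\geq\ex[2^{-K}]=(1-\overline{q}/2)^{m-l}\geq(1-\overline{q}/2)^{m-l_{min}}$ directly. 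Your proposed per-edge induction (or FKG) route is left entirely unproved, and your sanity check is inaccurate: for $k\geq 3$ parallel edges the ratio $k\overline{q}/(1-(1-\overline{q})^k)$ is strictly smaller than $(2/(2-\overline{q}))^{k-1}$, not equal to it.

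The second bound is where the plan genuinely fails, for three reasons. First, the claimed deterministic estimate ``$|h|\leq K$ implies $|\mathcal{P}(h)|\leq 2^{(\mu/m)K(K+l_{max})}$'' is false: take $m$ large, $\mu$ constant, and a state $h$ consisting of an $s$-$t$ path with its extra edges concentrated into a chain of disjoint parallel ``diamonds''; then $|\mathcal{P}(h)|$ grows like $2^{c|h|}$ while $(\mu/m)K(K+l_{max})$ can be made arbitrarily small. The factor $\mu/m$ in the exponent is inherently probabilistic: it is the chance that a \emph{uniformly random} additional edge has both endpoints in the already-saturated vertex set (at most $\mu|V_i|/m$ by the edge-vertex bound), and only the \emph{expectation} of $1/|\mathcal{P}(a)|$ obeys such a bound, not every state. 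Second, your truncation does not control the tail: $\ex[\,|\mathcal{P}(h)|\cdot\mathbf{1}\{|h|>K\}\,]$ cannot be bounded by $\prob[|h|>K]\leq 1/2$ alone because $|\mathcal{P}(h)|$ is exponentially large precisely on that event; in the paper the factor $2$ arises instead from the harmless step $\ex[2^{-\frac{\mu}{m}K(K+l)}]\geq 2^{-\frac{\mu}{m}\alpha(\alpha+l)}\prob[K\leq\alpha]$ applied to a nonnegative quantity bounded by $1$. Third, $\prob[\mathrm{Bin}(m,\overline{q})>\overline{q}m+\ln 2]\leq 1/2$ is not a Chernoff bound (the deviation $\ln 2$ is far below one standard deviation); it requires the binomial median--mean inequality of Hamza, which is why the offset is exactly $\ln 2$.
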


The first term of the minimum in our bound is a
slight improvement over the bound given by
(\ref{eq:karp_bound_multiplicities_uniform}). This can be
seen by observing that $2/(2-x)\leq (1+x) \;\forall x\in
[0,1]$. Contrary to bound
(\ref{eq:karp_bound_multiplicities_uniform}), the first term
of our bound may
be sharp. Furthermore it is independent of the graph topology and can easily
be generalized to \mbox{non-uniform} failure probabilities.

The second term of the minimum in
Theorem~\ref{thm:improved_bound} tries to exploit some
structure of the underlying network and is particularly interesting
for graphs with a low \mbox{edge-vertex} bound $\mu$ and without
long paths. For example, when working with networks where $\mu$ is bounded
by a constant and $l_{max}=O(\sqrt{m\log(m)})$,
Theorem~\ref{thm:improved_bound} implies that
Algorithm~\ref{alg:montecarlo_psi} is an FPRAS when
$\overline{q}=O(\sqrt{\log(m)/m})$,
which is a much weaker condition than the bound given by (\ref{eq:karp_bound_q}).

\section{Computational results}\label{sec:computational_results}

In order to test our algorithm we used two random generators for
creating directed acyclic graphs with low reliability. These generators are introduced in
the first part of this section. In a second part, we analyze the
running time of our algorithm on networks created by these
generators with different sizes and reliabilities.
Furthermore, the proposed algorithm is compared to a
direct \mbox{Monte-Carlo} simulation.

\subsection{Test instances}

\paragraph{Delaunay graphs (DEL)}

Our generator for directed acyclic Delaunay graphs takes two parameters, the number of
vertices $n$ and a uniform edge intactness probability $\overline{q}$.
We begin by choosing $n$ points uniformly at random in the unit square and consider the
undirected graph given by a Delaunay triangulation of these points. The two
terminals $s$ and $t$ are chosen as two vertices with maximal Euclidian
distance. We give a linear orientation to the edges corresponding to
the vector from $s$ to $t$, i.e., an (undirected) edge $\{v,w\}$ is oriented
as $(v,w)$ if the vector from $v$ to $w$ and the one from $s$ to $t$ have
a \mbox{non-negative} scalar product, otherwise we take the orientation $(w,v)$.
Finally, all edges get uniform intactness probability equal to $\overline{q}$.
One can easily observe that this construction guarantees that every vertex lies
on a path from $s$ to $t$.

\paragraph{Topological construction (TC)}

A second generator we use has three parameters, the number of vertices
$n$, a density parameter $\lambda\in[0,1]$ allowing to control the
expected number of edges in the graph and a parameter $\alpha$ influencing
the intactness probabilities. We begin with an empty
graph over $n$ vertices $V=\{v_1,v_2,\dots,v_n\}$ where $v_1=s$ and $v_n=t$.
The graph will be constructed such that $(v_1,v_2,\dots,v_n)$ is a topological
order of the vertices. In a first step, all edges of the graph are introduced,
then intactness probabilities are assigned to the edges.

For $i\in\{1,2,\dots,n-1\}$ we introduce an edge from $v_i$ to $v_{i+1}$. This
ensures that all vertices are on a path from $s$ to $t$.
All other possible edges will be present
with probability $\lambda$, i.e., for every $i,j\in\{1,2,\dots,n\}$ with $i+2\leq j$
we add an edge $(v_i,v_j)$ with probability $\lambda$.

Finally, the intactness probability of an edge $(v_i,v_j)$ is a number chosen
uniformly at random in the interval $[0,\frac{1}{(j-i)^{1-\alpha}}]$. By choosing
$\alpha < 1$, edges connecting topological near vertices have in general higher
intactness probabilities than edges connecting vertices being far away from each
other in the topological order. Therefore, smaller values for $\alpha$
result in less reliable networks. The value of $\alpha$ will typically be
chosen in $[0,1]$.

\paragraph{Parameter choice}

The parameters were fixed in such a way that networks of different reliabilities
were obtained for graphs with $10^3$,$10^4$,$10^5$ and $10^6$ vertices.
We generated DEL instances
for every
$\overline{q}\in\{0.01,0.02,\dots,1\}$ for every graph size $n\in\{10^3,10^4,10^5,10^6\}$.
For the creation of TC instances
the parameter $\lambda$ was always chosen such that the expected degree of
every vertex is equal to ten. This ensures that all graphs generated with the TC
generator having the same number of vertices also have about the same number of edges
and simplifies the comparison of running times. For every graph size $n\in\{10^3,10^4,10^5,10^6\}$
instances were generated for $\alpha\in\{0.01,0.02,\dots,1\}$.

The computational results presented in this section have been obtained on workstations
equipped with an AMD processor 3200+ and 1GB of RAM.

\subsection{Results and interpretations}

As a first observation, we noticed that the dependence of the running time
on $\epsilon$ and $\delta$ is essentially proportional to
$\log(\frac{1}{\delta})/{\epsilon^2}$ as predicted. We therefore
fixed $\epsilon=0.1$, $\delta=0.001$ for all results presented in this section.

Figure~\ref{fig:relrun} shows the running time
of the proposed algorithm as a function of the estimated reliability. As
expected, the running time grows when larger reliabilities have to be estimated,
as an intact state contains often several paths from $s$ to $t$. TC instances
with about the same reliability as DEL instances are much easier to tackle.
This comes from the fact that DEL instances are rather locally connected,
whereas most edges in TC instances were randomly chosen. Local connectedness
has the effect that when having an intact path $P$ from $s$ to $t$, there is
a good chance that several small subpaths of $P$ can be replaced by other intact
subpaths with the same start and endpoint. Every subpath which can be replaced
in this way raises the number of intact paths from $s$ to $t$. Furthermore, as
the replaceable subpaths are small it is likely that large groups of them are
disjoint, implying that various combinations of these subpath replacements
yield new intact paths from $s$ to $t$.
Figure~\ref{fig:relrun} shows that
even instances with $10^6$ vertices could be solved in reasonable time
as long as the estimated reliability was not too large.

\begin{figure}[h!]
\begin{center}
\resizebox{\linewidth}{!}{\includegraphics{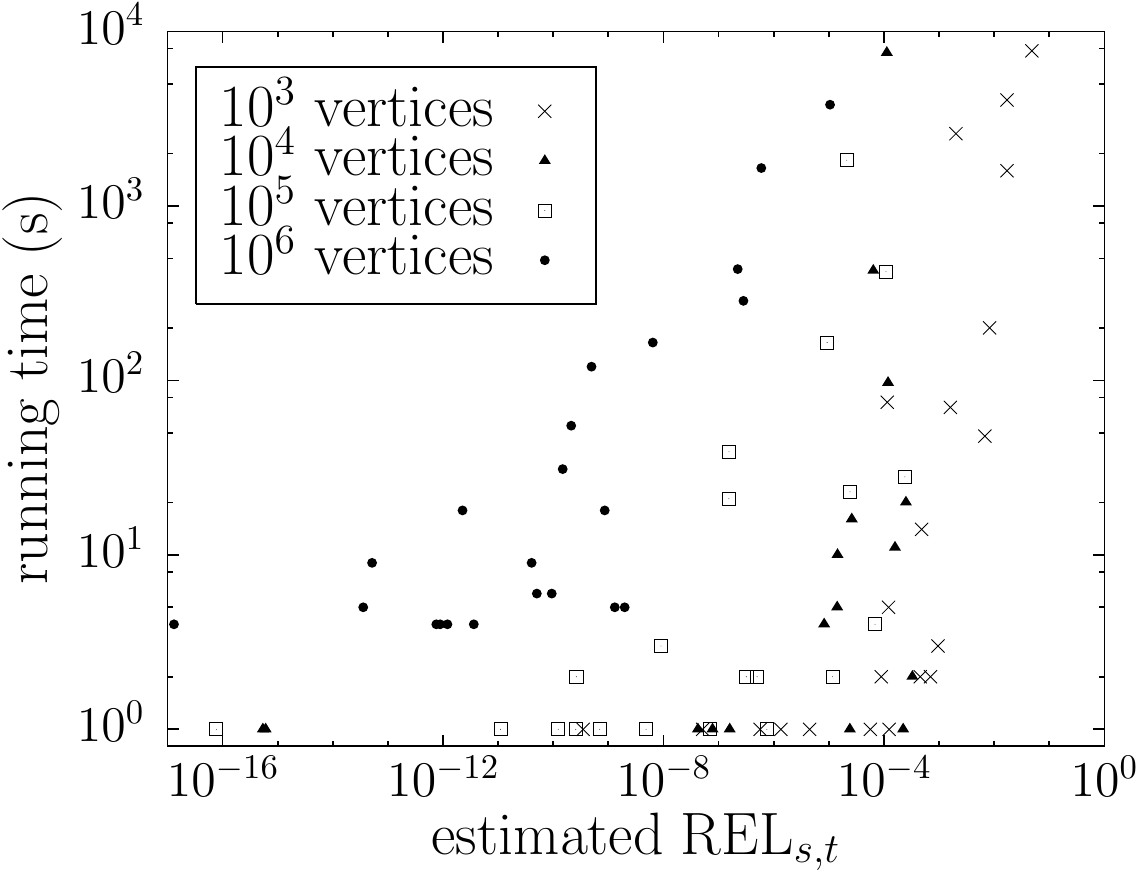}\hspace*{5mm}\includegraphics{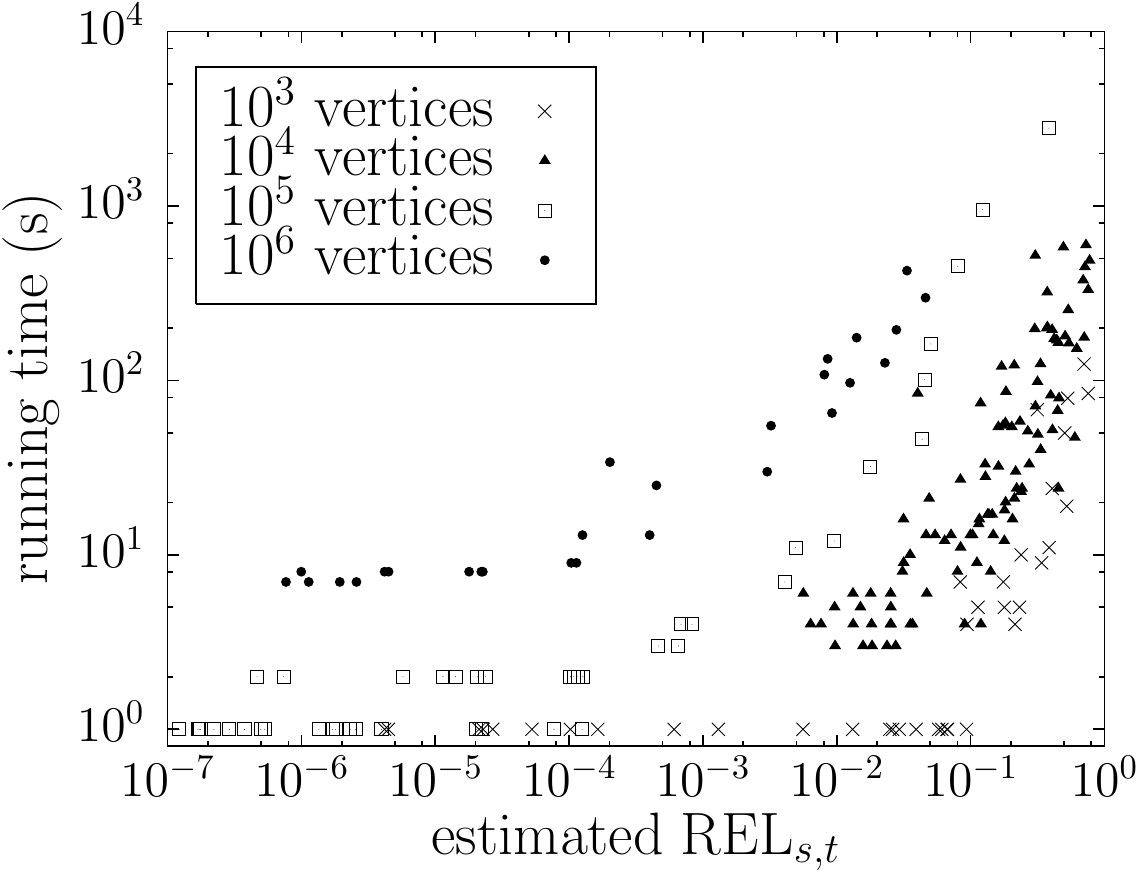}}
\caption{\label{fig:relrun} Running times of the proposed algorithm in
  function of the estimated reliabilities for DEL instances (on
  the left) and TC instances (on the right).}
\end{center}
\end{figure}

For the comparison of the proposed \mbox{Monte-Carlo} algorithm with
a direct \mbox{Monte-Carlo} approach, instances with $1000$ vertices
were used as most of these instances could have been solved in
reasonable time by both algorithms. As well as the proposed
\mbox{Monte-Carlo} algorithm, the direct \mbox{Monte-Carlo}
algorithm was implemented by using the sampling technique explained
in \ref{subsec:sampling_remaining_edges} allowing to reduce the
time needed per iteration in most instances.
Figure~\ref{fig:1000_compare} shows
the running times of both algorithms on DEL and TC instances with $1000$
vertices. As expected, the direct \mbox{Monte-Carlo} approach has an
approximately linear dependence on the reciprocal of the
estimated reliability.
Figure~\ref{fig:1000_compare} shows
the strength of the proposed algorithm when low reliabilities
have to be estimated.

\begin{figure}[h!]
\begin{center}
\resizebox{\linewidth}{!}{\includegraphics{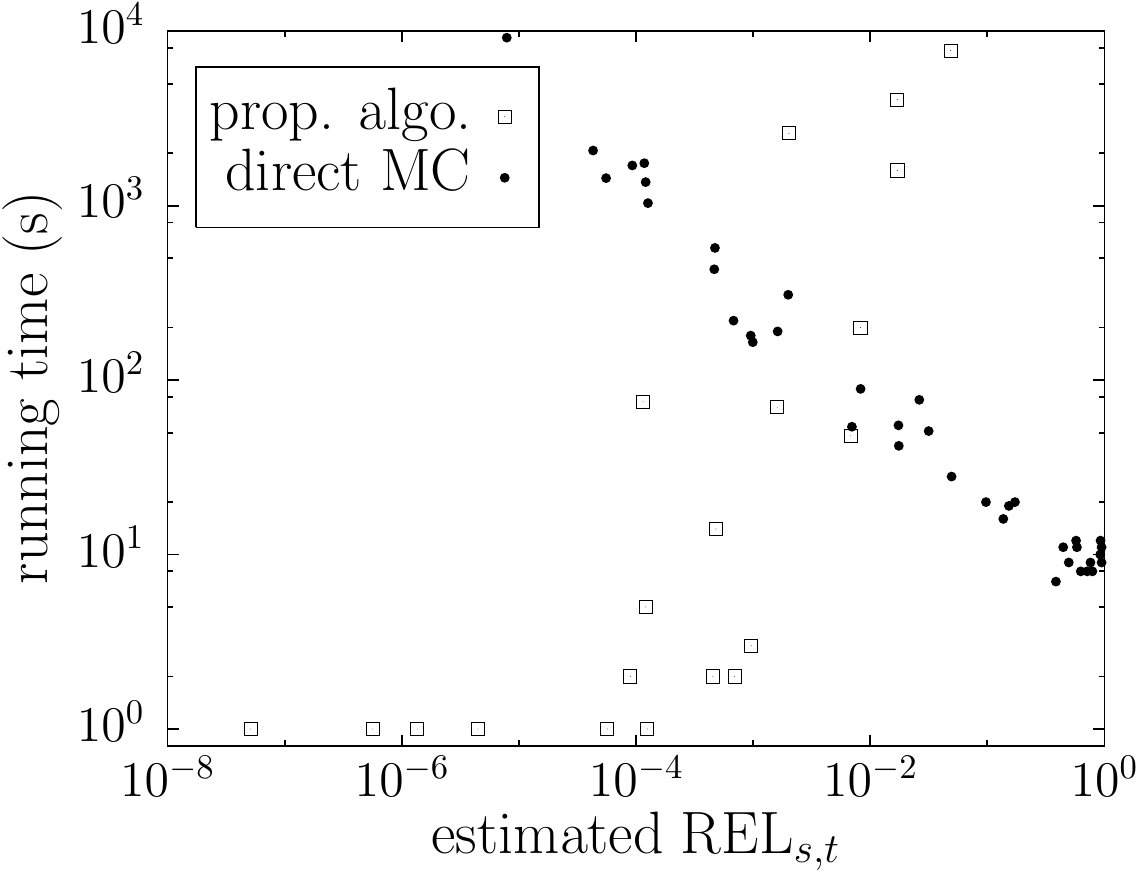}%
\hspace*{5mm}\includegraphics{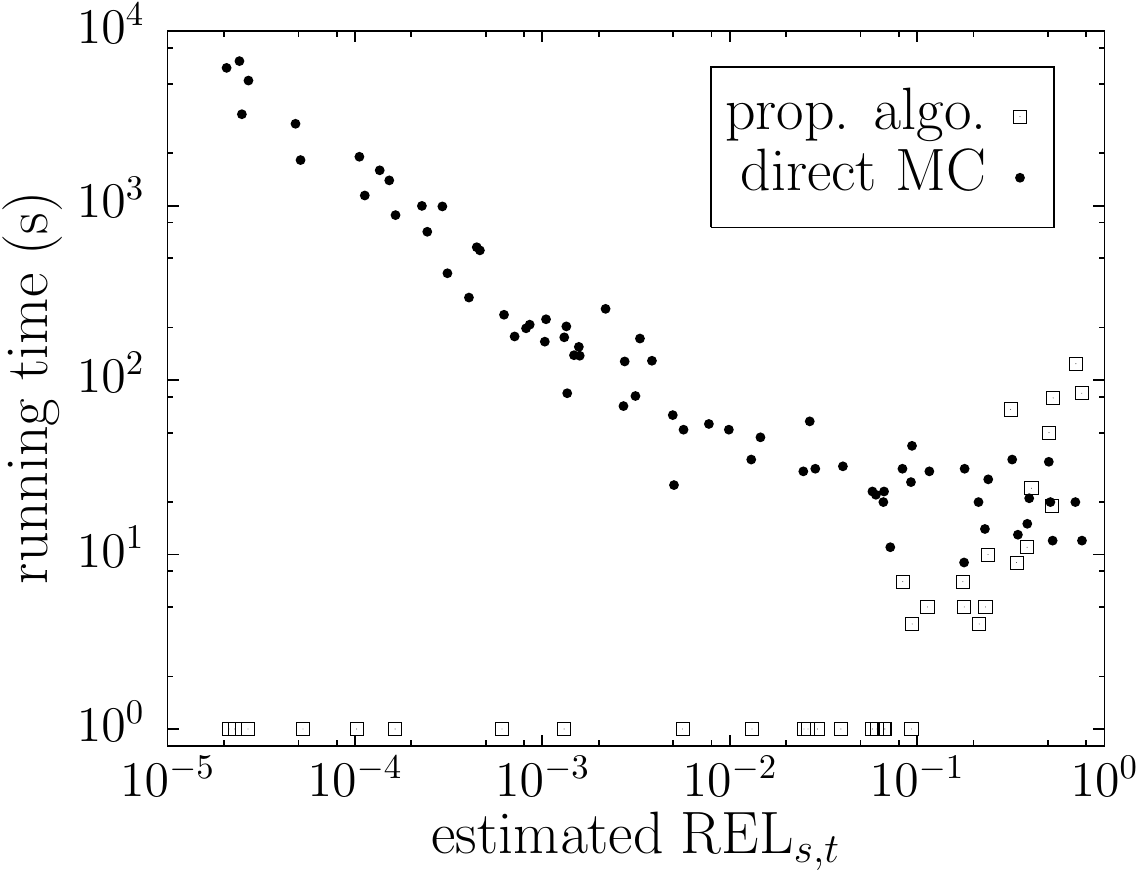}}
\caption{\label{fig:1000_compare} Running times of the proposed
algorithm (prop. algo.) and the direct \mbox{Monte-Carlo} (MC) approach
for DEL instances (on the left) and TC instances (on the right) with 
$1000$ vertices.}
\end{center}
\end{figure}

As the running time of the direct \mbox{Monte-Carlo} approach
depends nearly linear in $1/REL_{s,t}=1/w(A)$ (see
Figure~\ref{fig:1000_compare}) and
we expect that the proposed algorithm has a running time approximately
proportional to $w(\Omega)/w(A)$ (i.e. the reciprocal of the value
to estimate), we
expect the ratio between the running time of the proposed algorithm
and the direct \mbox{Monte-Carlo} approach is approximately linear
in $w(\Omega)$. This is confirmed by Figure~\ref{fig:1000_wOmegarunratio}
showing the ratio of the running time of both algorithms in function
of $w(\Omega)$ for DEL and TC instances with $1000$ vertices.
It is not surprising that both algorithms need about the same running time
when $w(\Omega)$ is near to one. This observation allows to perform a
simple a priori test for deciding which algorithm is better suited
for a particular instance. Given an instance, we first calculate
$w(\Omega)$ (in linear time). If $w(\Omega) \ll 1$ it is likely that
the proposed algorithm will be faster than the direct \mbox{Monte-Carlo}
approach. When $w(\Omega) \gg 1$, the direct \mbox{Monte-Carlo} approach
is likely to be the more efficient algorithm.

\begin{figure}[ht!]
\begin{center}
\includegraphics[width=0.6\linewidth]{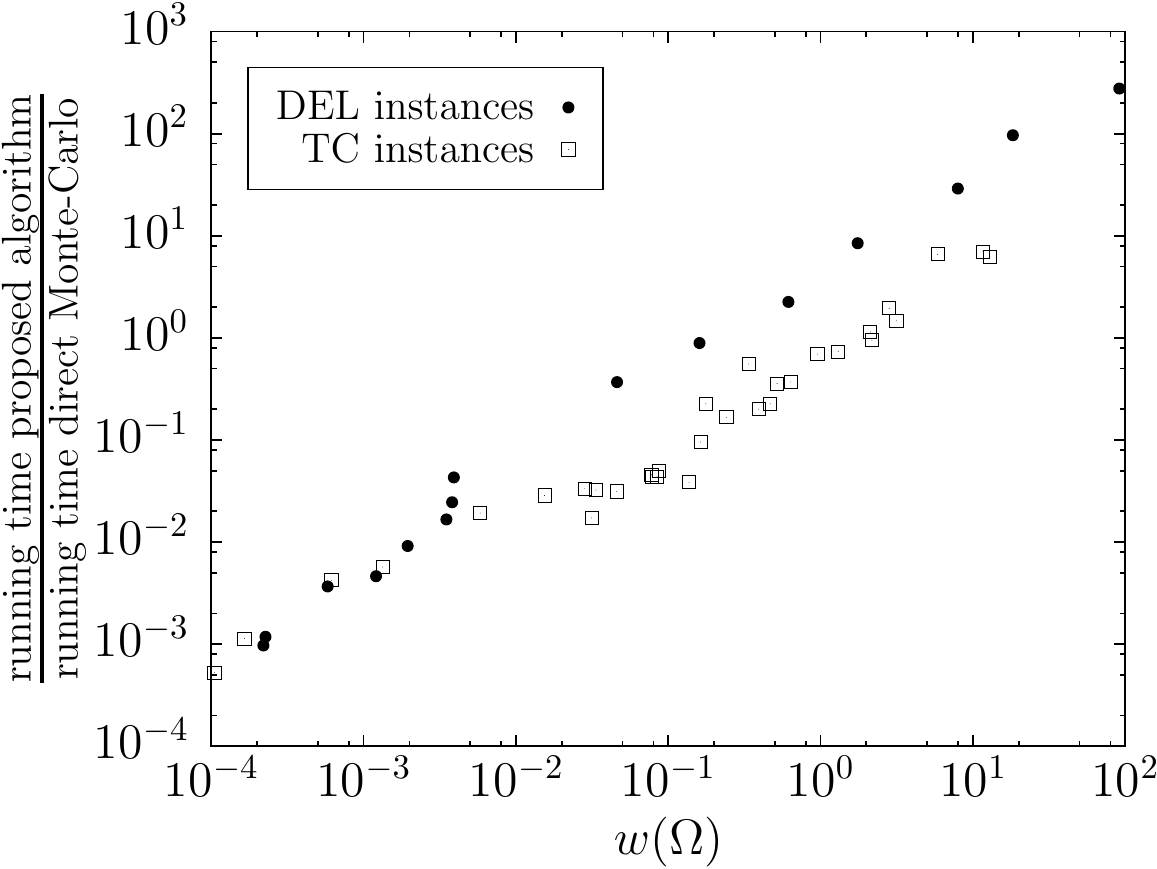}
\caption{\label{fig:1000_wOmegarunratio} Comparison of the running time of the proposed
algorithm and the direct \mbox{Monte-Carlo} approach in function of $w(\Omega)$
for networks with $1000$ vertices.}
\end{center}
\end{figure}

\section{Conclusions}

An adapted version of the Monte-Carlo algorithm given by Karp and Luby
in \cite{karp_1985_montecarlo} was presented and analyzed.
The new algorithm is specialized for directed acyclic graphs and is suited for the estimation
of small reliabilities. Computational results show the successful
application of the proposed algorithm on two types of randomly
generated large-scale instances 
and its advantage compared to the direct \mbox{Monte-Carlo} approach
when very small reliabilities have to be estimated.
Previous algorithms for accurate estimation of $s$-$t$ reliability
were only applicable on either very small instances or on
a very restricted class of initial networks.
For the case of uniform
edge failure probabilities, a \mbox{worst-case} bound
on the number of samples to be drawn was given that sharpens a bound presented in
\cite{karp_1985_montecarlo} and is significantly stronger in the case of
relatively sparse graphs without long paths from $s$ to $t$.

One important open question in this domain is if there exists an FPRAS
for estimating $s$-$t$ reliability in directed acyclic graphs. It would be interesting to
find algorithms allowing to tackle instances
efficiently that cannot be solved in reasonable time by our algorithm or the
direct \mbox{Monte-Carlo} approach.
Another point is the generalization of the upper bound for $w(\Omega)/w(A)$
for the case of \mbox{non-uniform} failure probabilities.
Additionally for the case of general (not necessarily acyclic) networks
there seem to be no practically efficient algorithms at the moment for the
estimation of low reliabilities on large instances.

\bibliographystyle{plain}
\bibliography{networks}

\section*{Appendix}

\subsubsection*{Proof of Theorem~\ref{thm:improved_bound}}

Let $(\gamma,a) \in \Omega$ be a
sample drawn
according to lines
\mbox{\ref{algline:path_sampling}-\ref{algline:end_sampling}}
of \mbox{Algorithm~\ref{alg:montecarlo_psi}}. We define the
influence value $m(\gamma,a)$ of the sample as in
Section~\ref{sec:our_algorithm} as a random variable
\begin{equation*}
m: \Omega \longrightarrow \Real, \quad m(\gamma,a)=\frac{1}{|\mathcal{P}(a)|}\;.
\end{equation*}

As discussed in Section~\ref{sec:our_algorithm}, the influence
value of a sample from $\Omega$ can be used as an unbiased estimator for
$w(A)/w(\Omega)$ as we have
\begin{equation*}
\ex[m(\gamma,a)]=\frac{w(A)}{w(\Omega)}\;.
\end{equation*}

By the above equation, the reciprocal of a lower bound on $\ex[m]$
is an upper bound on $w(\Omega)/w(A)$. We will therefore
deduce the bound given in Theorem~\ref{thm:improved_bound} by
deriving a lower bound bound on $w(A)/w(\Omega)$.

Let $\Omega_{l,k}$ be the subset of all elements of the sample
space $\Omega$ where the initial chosen path has length $l$
and the total number of appeared edges is $l+k$, i.e.,

\begin{equation*}
\Omega_{l,k}=\{(\gamma,a)\in\Omega\mid |\gamma|=l, |a|=l+k\}\;.
\end{equation*}

It is clear that all elements $\omega \in \Omega_{l,k}$ appear with equal probability
and can be
sampled by the method described in Algorithm~\ref{alg:uniform_Omegalk}. This method is introduced just
for theoretical analysis.

\linesnumbered
\begin{algorithm}[H]
\SetLine
\caption{Sampling uniformly from $\Omega_{l,k}$\label{alg:uniform_Omegalk}}
Choose uniformly a path $\gamma\in\mathcal{P}$ with length $l$\;
$a=\gamma$\;
$R=E\setminus \gamma$\;
\For{$i=1$ to $k$}{Choose an edge $e_i\in R$ uniformly
  at random\;
$a=a\cup e_i$\;
$R=R\setminus e_i$\;}
\Return{$(\gamma,a)$}
\end{algorithm}

With every edge $e_i$ added in the \mbox{for-loop} of
Algorithm~\ref{alg:uniform_Omegalk} we associate a multiplicity
$M_{e_i}\in\{1/2,1\}$ equal to $1/2$ if both endpoints of $e_i$
are saturated by edges in $\gamma \cup \{e_j\mid 1\leq j \leq i-1\}$
and equal to $1$ otherwise. Intuitively, the multiplicity
is a measure for the influence of an added
edge in Algorithm~\ref{alg:uniform_Omegalk} on the ratio
$w(A)/w(\Omega)$. The following lemma formalizes this
intuition.

\begin{lemma}\label{lem:multiplicity}
Let $(\gamma,a)\in \Omega_{l,k}$ be an intact state constructed as described in
Algorithm~\ref{alg:uniform_Omegalk}. Then we have
\begin{equation*}
\frac{1}{|\mathcal{P}(a)|}\geq \prod_{i=1}^k M_{e_i}\;.
\end{equation*}
\end{lemma}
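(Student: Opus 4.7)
The plan is to prove the equivalent inequality $|\mathcal{P}(a)| \leq 2^C$, where $C = |\{i \in \{1,\dots,k\} : M_{e_i} = 1/2\}|$, by exhibiting an injection from $\mathcal{P}(a)$ into the power set of the ``multi'' edges. Set $\mathcal{S} = \{e_i : M_{e_i} = 1\}$ and $\mathcal{M} = \{e_i : M_{e_i} = 1/2\}$, so that $a = \gamma \cup \mathcal{S} \cup \mathcal{M}$ and $|\mathcal{M}| = C$.

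The central structural observation I would establish is that $\gamma \cup \mathcal{S}$, viewed as an undirected subgraph of $G$, contains no cycle, i.e., it is a forest. Suppose for contradiction that some undirected cycle $Z$ is contained in $\gamma \cup \mathcal{S}$. Since $\gamma$ alone is a simple directed path, $Z$ must use at least one edge of $\mathcal{S}$; let $e^*$ be such an edge of largest addition index $i^*$. Each of the two endpoints of $e^*$ carries a second edge of $Z$, which lies either in $\gamma$ or in $\mathcal{S}$ at a strictly earlier index than $i^*$. In both cases, both endpoints of $e^*$ have an incident edge already in $a_{i^*-1}$, so they are saturated at step $i^*$, forcing $M_{e^*} = 1/2$ and contradicting $e^* \in \mathcal{S}$.

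Given the forest property, I would show that the map $\Phi : \mathcal{P}(a) \to 2^{\mathcal{M}}$ defined by $\Phi(\pi) = \pi \cap \mathcal{M}$ is injective. If $\Phi(\pi_1) = \Phi(\pi_2)$ for two $s$-$t$ paths $\pi_1, \pi_2 \in \mathcal{P}(a)$, then $\pi_1 \triangle \pi_2$ contains no edge from $\mathcal{M}$ and is therefore a subset of $\gamma \cup \mathcal{S}$. A standard parity count shows that every vertex has even undirected degree in the symmetric difference of two $s$-$t$ paths, so $\pi_1 \triangle \pi_2$ decomposes into undirected edge-disjoint cycles. Since $\gamma \cup \mathcal{S}$ is a forest, the only possibility is $\pi_1 \triangle \pi_2 = \emptyset$, i.e., $\pi_1 = \pi_2$. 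Injectivity then yields $|\mathcal{P}(a)| \leq 2^{|\mathcal{M}|} = 2^C$, which rearranges into the inequality claimed by the lemma.

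The conceptually tricky part is identifying the right invariant: a naive step-by-step induction on $k$ fails because a single multi edge added near the end of the algorithm can multiply the number of $s$-$t$ paths by far more than a factor of two (while every simple edge leaves the path count unchanged, since one of its endpoints has no incident edge in $a_{i-1}$ and so is not on any $s$-$t$ path through $e_i$). The encoding via $\Phi$ circumvents this difficulty by working globally rather than incrementally, and the forest argument shows the encoding is lossless. Apart from this insight, both subclaims reduce to short latest-added-edge and even-degree arguments.
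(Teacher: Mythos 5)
Your proof is correct and follows essentially the same route as the paper's: your injection $\Phi(\pi)=\pi\cap\mathcal{M}$ is exactly the paper's proposition that each subset $H\subseteq F^{1/2}$ determines at most one $s$-$t$ path in $a$, and both arguments conclude via the symmetric difference of two paths yielding a cycle avoiding the ``multi'' edges. Your only addition is an explicit latest-added-edge argument for why $\gamma\cup\mathcal{S}$ is acyclic, a fact the paper asserts only parenthetically.
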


Before proving this lemma, we discuss the link between the lemma
and Algorithm~\ref{alg:uniform_Omegalk}. In general there
are multiple ways to get an intact state $a\in A$ with
Algorithm~\ref{alg:uniform_Omegalk}. Depending on how the
state $a$ was obtained, the multiplicities associated with
the edges in $a$ are different. As
Lemma~\ref{lem:multiplicity} is true for every possible way
of obtaining state $a$ by Algorithm~\ref{alg:uniform_Omegalk}, it
is applicable even to intact states that were not
constructed through Algorithm~\ref{alg:uniform_Omegalk}. One
just has to fix a possible way how the state $a$ could have
been constructed by Algorithm~\ref{alg:uniform_Omegalk},
i.e., an $s$-$t$ path $\gamma\subseteq a$ has to be fixed as well
as an order for the edges in $a\setminus\gamma$ specifying in
which sequence those edges were chosen in
Algorithm~\ref{alg:uniform_Omegalk}.
The multiplicities can then be calculated with respect to this order, and
Lemma~\ref{lem:multiplicity} can be applied.

\begin{proof}[Proof of Lemma~\ref{lem:multiplicity}]
Let $(\gamma,a)\in \Omega_{l,k}$ be an intact state
constructed as described in
Algorithm~\ref{alg:uniform_Omegalk}. We define
$F=\{e_1,e_2,\dots,e_k\}=a\setminus\gamma$ to be the set of
all edges added to the initial path $\gamma$ during the
construction of $a$. Let $F^{\frac{1}{2}},F^{1}$ be
the following partitioning of the edges in $a$.
\begin{align*}
F^{\frac{1}{2}}&=\{e\in F \mid M_{e}=\frac{1}{2}\}\\
F^1&=\gamma \cup \{e\in F \mid M_{e}=1\}
\end{align*}
We prove the following statement, which immediately
implies Lemma~\ref{lem:multiplicity}.

\begin{proposition*}
For every set $H\subseteq F^{\frac{1}{2}}$ there exists at
most one $s$-$t$ path in the state $a$ that contains all
edges of $H$ and none of $F^\frac{1}{2}\setminus H$.
\end{proposition*}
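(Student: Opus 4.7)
The plan is to prove the proposition by contradiction via a classical symmetric-difference argument. Suppose for contradiction that two distinct $s$-$t$ paths $P_1,P_2\subseteq a$ both satisfy $P_i\cap F^{1/2}=H$. Since $F^1$ and $F^{1/2}$ partition $a$, the symmetric difference $P_1\triangle P_2$ is a non-empty subset of $F^1$. I will reach a contradiction by showing that $P_1\triangle P_2$ must contain an (undirected) cycle, whereas $F^1$, viewed as an undirected edge set, cannot.

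The key structural claim I would prove is that $F^1$ is a forest in the undirected sense. This follows by a straightforward induction on the order in which Algorithm~\ref{alg:uniform_Omegalk} adds edges, restricted to those ending up in $F^1$. The initial path $\gamma\subseteq F^1$ is a tree. If $e_i\in F^1$ is added later, then $M_{e_i}=1$ means at least one endpoint $u$ of $e_i$ is not saturated by $\gamma\cup\{e_j:j<i\}$, and in particular $u$ is not touched by any $F^1$-edge added before $e_i$. Thus $e_i$ either attaches a new leaf to the current forest or introduces an isolated new edge on two fresh vertices; either way the forest property is preserved.

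For the contradiction itself I would invoke the standard fact that the symmetric difference of two distinct $s$-$t$ paths is a non-empty \emph{even} subgraph of the underlying undirected graph, that is, every vertex has even undirected degree in $P_1\triangle P_2$ (parity check: both $s$ and $t$ contribute $1+1$, and every internal vertex contributes $0$ or $2$ in each path). Any non-empty even subgraph contains a cycle by the usual greedy walk. This yields a cycle sitting inside $F^1$, contradicting the forest property established above, so $P_1=P_2$. The proposition then implies Lemma~\ref{lem:multiplicity} at once: the map $P\mapsto P\cap F^{1/2}$ from $\mathcal{P}(a)$ to subsets of $F^{1/2}$ is injective, giving $|\mathcal{P}(a)|\leq 2^{|F^{1/2}|}$, i.e.\ $1/|\mathcal{P}(a)|\geq\prod_{i=1}^k M_{e_i}$.

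The main subtlety, and where I would be most careful, is that the multiplicity condition defining $F^{1/2}$ versus $F^1$ is formulated with respect to \emph{all} previously added edges, including those that will ultimately be classified into $F^{1/2}$. One must therefore verify that ``not saturated by the whole prefix'' is a stronger condition than ``not saturated by the $F^1$-prefix,'' so that the forest induction above still runs cleanly on the restriction to $F^1$. Once this bookkeeping is settled, the rest of the argument reduces to two textbook facts about undirected graphs.
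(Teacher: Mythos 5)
Your proof is correct and follows essentially the same route as the paper: assume two distinct $s$-$t$ paths with the same trace on $F^{\frac{1}{2}}$, observe that their symmetric difference is a non-empty even subgraph contained in $F^1$, hence contains an (undirected) cycle, and contradict the acyclicity of $F^1$. The only difference is that you spell out, via the forest induction on the insertion order, the step the paper compresses into the parenthetical remark that any cycle in $a$ must use an edge of $F^{\frac{1}{2}}$ — a worthwhile elaboration, and your handling of the ``saturated by the whole prefix versus the $F^1$-prefix'' point is exactly the right check.
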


Lemma~\ref{lem:multiplicity} follows from the above
proposition by the following observation. The proposition
implies that there are at most as many different $s$-$t$
paths in $a$ as there are subsets of $F^{\frac{1}{2}}$. We
therefore have
\begin{equation*}
|\mathcal{P}(a)|\leq 2^{\left|F^{\frac{1}{2}}\right|},
\end{equation*}
which finally implies
\begin{equation*}
\frac{1}{|\mathcal{P}(a)|}\geq%
\left(\frac{1}{2}\right)^{\left|F^{\frac{1}{2}}\right|}=\prod_{i=1}^k M_{e_i}\;.
\end{equation*}

It remains to prove the proposition. Let $H\subseteq
F^{\frac{1}{2}}$ and suppose that we have two different
$s$-$t$ paths $\gamma_1,\gamma_2\subseteq a$ such that both
contain all edges of $H$ and none of
$F^{\frac{1}{2}}\setminus H$. This implies that their
symmetric difference $\gamma_1 \Delta \gamma_2\subseteq a$
contains a cycle consisting only of edges in $F^1$, thus
contradicting the fact that $F^1$ does not contain cycles
(any cycle in $a$ contains at least one element of $F^{\frac{1}{2}}$).\\
\end{proof}

With the aid of Lemma~\ref{lem:multiplicity} we prove
the following intermediate result.

\begin{lemma}\label{lem:bound_Omegalk}
Let $l\in \mathbb{N}, k\in \{0\}\cup\mathbb{N}$
with $l+k\leq m$ and $\omega=(\gamma,a)$ be
a random sample from the sample space
according to Algorithm~\ref{alg:montecarlo_psi}. We have
\begin{equation*}
\ex\left[\frac{1}{|\mathcal{P}(a)|} \mid
  (\gamma,a)\in\Omega_{l,k} \right]%
\geq \max\left\{2^{-k},2^{-\frac{\mu}{m}k(k+l)}\right\}\;.
\end{equation*}
\end{lemma}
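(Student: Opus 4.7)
The plan is to combine Lemma~\ref{lem:multiplicity} with Jensen's inequality and a linearity-of-expectation calculation that exploits the edge-vertex bound through Algorithm~\ref{alg:uniform_Omegalk}. By Lemma~\ref{lem:multiplicity}, conditional on $(\gamma,a)\in\Omega_{l,k}$ we have
\begin{equation*}
\frac{1}{|\mathcal{P}(a)|}\;\geq\;\prod_{i=1}^{k} M_{e_i}\;=\;2^{-N},
\end{equation*}
where $N:=|\{i:M_{e_i}=1/2\}|$. The first lower bound is then immediate: each $M_{e_i}\in\{1/2,1\}$, so $N\leq k$ deterministically, which yields $\ex[1/|\mathcal{P}(a)|\mid\Omega_{l,k}]\geq 2^{-k}$.

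For the second bound, note that $x\mapsto 2^{-x}$ is convex, so Jensen's inequality gives $\ex[2^{-N}\mid\Omega_{l,k}]\geq 2^{-\ex[N\mid\Omega_{l,k}]}$. It therefore suffices to show $\ex[N\mid\Omega_{l,k}]\leq \frac{\mu}{m}k(k+l)$. Since conditional on $\Omega_{l,k}$ the sample is uniform and can be produced by Algorithm~\ref{alg:uniform_Omegalk}, linearity of expectation gives $\ex[N\mid\Omega_{l,k}]=\sum_{i=1}^{k}\prob[M_{e_i}=1/2]$. At step $i$, $M_{e_i}=1/2$ exactly when the uniformly chosen $e_i$ has both endpoints in the current saturated set $S_i$. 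The edge-vertex bound ensures that at most $\mu|S_i|$ edges of $E$ have both endpoints in $S_i$, and $l+(i-1)$ of them are already in the partial state; among the remaining $m-l-(i-1)$ unchosen edges this yields
\begin{equation*}
\prob[M_{e_i}=1/2\mid\mathcal{F}_{i-1}]\;\leq\;\frac{\mu|S_i|}{m-l-(i-1)}.
\end{equation*}
Combining a deterministic bound on $|S_i|$, driven by $S_i\subseteq V(a)$ and the fact that $a$ has only $l+k$ edges (so its vertex set is controlled up to the forest/cycle structure inherent in the $F^1/F^{1/2}$ decomposition from the proof of Lemma~\ref{lem:multiplicity}), with the assumption $l+k\leq m$ and summing over $i=1,\dots,k$ should deliver the target bound.

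The main obstacle will be extracting exactly the factor $(k+l)/m$ per step. The crude bound $|S_i|\leq l+1+2(i-1)$ and the step-dependent denominator $m-l-(i-1)$ together already give something of the right order, but obtaining the clean form $\frac{\mu k(k+l)}{m}$ requires either a tighter structural upper bound on $|V(a)|$ (for instance of the form $|V(a)|\leq l+k+1$ when $F^1$ is a spanning connected subgraph, with slack absorbed in other cases) or a slightly different grouping of the $k$ contributions that amortizes the growth in the numerator against the shrinkage in the denominator. Apart from this bookkeeping, the argument is conceptually straightforward: Lemma~\ref{lem:multiplicity} does the structural work, Jensen linearizes the exponent, and the edge-vertex bound converts geometric sparsity into a collision-probability bound on the uniformly sampled edges.
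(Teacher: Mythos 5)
Your plan is essentially the paper's proof: Lemma~\ref{lem:multiplicity}, the coupling with Algorithm~\ref{alg:uniform_Omegalk}, and the saturation count $|S_i|\leq l+1+2(i-1)$ all appear there in the same roles. The only methodological difference is that you linearize the exponent with Jensen's inequality ($\ex[2^{-N}]\geq 2^{-\ex[N]}$ plus linearity of expectation), whereas the paper pushes conditional expectations through the product $\prod_i M_{e_i}$ using $(1/4)^x\leq 1-x$ on $[0,1/2]$; both give identical constants and yours is arguably cleaner, as is your handling of the $2^{-k}$ term via the deterministic bound $N\leq k$.

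The one genuine gap is the step you flag yourself, and the fix is much simpler than either repair you propose: you need neither a tighter bound on $|V(a)|$ nor any amortization across steps. The problem is only that you dropped the already-chosen edges from the numerator while keeping them in the denominator. All $l+(i-1)$ edges of the current partial state have both endpoints in the saturated set $S_i$, so among the $m-l-(i-1)$ remaining candidate edges at most $\mu|S_i|-(l+i-1)$ have both endpoints in $S_i$. Hence
\begin{equation*}
\prob\bigl[M_{e_i}=\tfrac{1}{2}\mid\mathcal{F}_{i-1}\bigr]\;\leq\;\min\left\{1,\frac{\mu|S_i|-(l+i-1)}{m-(l+i-1)}\right\}\;\leq\;\min\left\{1,\frac{\mu|S_i|}{m}\right\}\;\leq\;\min\left\{1,\frac{\mu(2i+l-1)}{m}\right\},
\end{equation*}
where the middle step is the elementary fact that $(a-c)/(b-c)\leq a/b$ whenever $a\leq b$ and $0\leq c<b$ (and if $\mu|S_i|>m$ the bound exceeds $1$ and is trivially valid); note $m-(l+i-1)>0$ follows from $l+k\leq m$. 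Summing over $i$ gives $\ex[N\mid\Omega_{l,k}]\leq\frac{\mu}{m}\sum_{i=1}^{k}(2i+l-1)=\frac{\mu}{m}\,k(k+l)$, and Jensen then yields $\ex[2^{-N}\mid\Omega_{l,k}]\geq 2^{-\frac{\mu}{m}k(k+l)}$. With this one-line correction your argument closes and coincides, up to the Jensen-versus-product packaging, with the paper's.
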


\begin{proof}[Proof of Lemma~\ref{lem:bound_Omegalk}]
Let $\omega'=(\gamma',a')$ be a sample corresponding
to a result of Algorithm~\ref{alg:uniform_Omegalk} for the
given $l$ and $k$.
Conditioned on $\Omega_{l,k}$, $\omega$ has therefore the same
distribution as $\omega'$.
Let $M_{e_1},M_{e_2},\dots,M_{e_k}$ be the random variables
corresponding to the multiplicities of the edges in
$a'\setminus\gamma'$. By Lemma~\ref{lem:multiplicity} we
have
\begin{equation}\label{eq:ex_multiplicities}
\ex\left[\frac{1}{|\mathcal{P}(a)|} \mid
  (\gamma,a)\in\Omega_{l,k} \right]%
=\ex\left[\frac{1}{|\mathcal{P}(a')|}\right]
\geq \ex\left[\prod_{i=1}^k M_{e_i}\right]\;.
\end{equation}

Let $i\in \{1,\dots,k\}$. Observe that independently of which
path $\gamma'$ and which edges $e_{1},\dots,e_{k}$ were
chosen, we have that at most $l+1+2(i-1)$ vertices in
$G$ are saturated by the edges in $\gamma' \cup
\{e_{1},\dots,e_{i-1}\}$ ($l+1$ vertices are saturated
through $\gamma'$ and every additional edge saturates at most two
new vertices in $G$). Let $V_i$ be the vertices in $G$ which
are saturated by $\gamma' \cup \{e_{1},\dots,e_{i-1}\}$.

In Algorithm~\ref{alg:uniform_Omegalk}, the edge $e_i$ is chosen
uniformly at random from the remaining edges $E\setminus
(\gamma' \cup \{e_{1},\dots,e_k\})$. As $G$ is sparse
with \mbox{edge-vertex} bound $\mu$ we have that at most $\mu
|V^i|\leq \mu (l+1+2(i-1))$ edges have both
endpoints in $V_i$. Furthermore, $i-1$ of these edges were already
chosen. We therefore have the following stochastic
inequality (which is true for any realization of
$\gamma',M_{e_{1}},\dots,M_{e_{i-1}}$):
\begin{align*}
\prob\left(M_{e_i}=\frac{1}{2} \mid \gamma',
M_{e_{1}},\dots,M_{e_{i-1}}\right) &\leq%
\min\left\{1,\frac{\mu |V^i|-(i-1)}{m-(i-1)}\right\}\\
&\leq \min\left\{1,\frac{\mu |V_i|}{m}\right\}\\
&\leq \min\left\{1,\frac{\mu(2i+l-1)}{m}\right\}\;.
\end{align*}

The stochastic inequality above allows to give a simple
bound on the following conditional expectation:
\begin{equation*}
\begin{aligned}
\ex[M_{e_i}\mid \gamma',M_{e_{1}},\dots,M_{e_{i-1}}]
&=&&\frac{1}{2} \, \prob\left(M_{e_i}=\frac{1}{2} \mid \gamma',M_{e_1},\dots,M_{e_{i-1}}\right)+\\
&&&\prob\left(M_{e_i}=1 \mid \gamma',M_{e_{1}},\dots,M_{e_{i-1}}\right)\\
&=&&1-\frac{1}{2}\prob\left(M_{e_i}=\frac{1}{2} \mid \gamma', M_{e_1},\dots,M_{e_{i-1}}\right)\\
&\geq&& \max\left\{\frac{1}{2},1-\frac{\mu(2i+l-1)}{2m}\right\}\;.
\end{aligned}
\end{equation*}

Applying the above inequality, the expectation in
(\ref{eq:ex_multiplicities}) can be developed as
\begin{align*}
\ex\left[\prod_{i=1}^k M_{e_i}\right] &\geq
\prod_{i=1}^k\max\left\{\frac{1}{2},1-\frac{\mu(2i+l-1)}{2m}\right\}\\
&\geq \prod_{i=1}^k\max\left\{\frac{1}{2},%
\left(\frac{1}{4}\right)^{\frac{\mu(2i+l-1)}{2m}}\right\}\;.
\end{align*}
The last inequality comes from the fact that $(1/4)^x
\leq 1-x$ for $x\in[0,1/2]$ and $(1/4)^x\leq 1/2$ for $x\geq
1/2$.
Developing further, we finally get the result of
Lemma~\ref{lem:bound_Omegalk}.
\begin{align*}
\prod_{i=1}^k\max\left\{\frac{1}{2},%
\left(\frac{1}{4}\right)^{\frac{\mu(2i+l-1)}{2m}}\right\}
&\geq%
\max\left\{2^{-k},2^{-\overset{k}{\underset{i=1}{\sum}}\frac{\mu(2i+l-1)}{m}}\right\}\\
&\geq%
\max\left\{2^{-k},2^{-\frac{\mu}{m}k(k+l)}\right\}
\end{align*}
\end{proof}

Beginning with the result of Lemma~\ref{lem:bound_Omegalk}
we now prove Theorem~\ref{thm:improved_bound} by first weakening and
then eliminating the conditioning on $\Omega_{l,k}$.
Let $\Omega_l$ be the set of all elements of the sample
space $\Omega$ where the initial chosen path has length
$l$, i.e.,
\begin{equation*}
\Omega_l=\{(\gamma,a)\mid |\gamma|=l\}\;.
\end{equation*}

Let $K$ be the random variable corresponding to the number
of edges that appeared additionally to the ones of the
initial path, when drawing an element out of $\Omega_l$.
Note that $K$ is binomially distributed as
\begin{equation*}
K\sim Bin(m-l,\overline{q})\, .
\end{equation*}

Using Lemma~\ref{lem:bound_Omegalk} we get
\begin{align}
\ex\left[\frac{1}{|\mathcal{P}(a)|}\mid (\gamma,a)\in\Omega_l
\right]%
&\geq%
\ex\left[\max\left\{2^{-K},2^{-\frac{\mu}{m}K(K+l)}\right\}\right]
\notag \\
&\geq%
\max\left\{\ex\left[2^{-K}\right],\ex\left[2^{-\frac{\mu}{m}K(K+l)}\right]\right\}
\notag\\
&=%
\max\left\{\left(1-\frac{\overline{q}}{2}\right)^{m-l},%
\ex\left[2^{-\frac{\mu}{m}K(K+l)}\right]\right\}\;.\label{eq:bound_Omegal}
\end{align}

By replacing $l$ by $l_\mathrm{min}$ in the first term of the
above maximum we get the first part of the inequality in Theorem~\ref{thm:improved_bound}, i.e.,
\begin{equation*}
\frac{w(\Omega)}{w(A)}=\left(\ex\left[\frac{1}{|\mathcal{P}(a)|}\right]\right)^{-1}%
\leq%
\left(1-\frac{\overline{q}}{2}\right)^{-(m-l_\mathrm{min})}%
=%
\left(\frac{2}{2-\overline{q}}\right)^{m-l_\mathrm{min}}\;.
\end{equation*}

The remaining part of Theorem~\ref{thm:improved_bound} will be shown
by developing the second term of (\ref{eq:bound_Omegal}) further.
We will use the inequality
\begin{equation*}
\ex[2^{-\frac{\mu}{m}K(K+l)}]\geq
2^{-\frac{\mu}{m}\alpha(\alpha+l)} \prob[K\leq\alpha] \, ,
\end{equation*}
which is true for any $\alpha \in \mathbb{R}$ and a result shown by
Hamza \cite{hamza_1995_smallest} stating that the distance between
the median and the mean of a binomial random variable is at most
$\ln(2)$.
Therefore, when choosing
$\alpha=\overline{q}m+\ln(2)$, we have
$\prob[K\leq\alpha]\geq 1/2$ and get the following result:
\begin{align*}
\ex[2^{-\frac{\mu}{m}K(K+l)}]%
&\geq%
2^{-1-\frac{\mu}{m}(\overline{q}(m-l)+\ln(2))(\overline{q}(m-l)+\ln(2)+l)}\\
&\geq%
2^{-1-\frac{\mu}{m}(\overline{q}m+\ln(2))(\overline{q}m+\ln(2)+l)}\\
&\geq%
2^{-1-\frac{\mu}{m}(\overline{q}m+\ln(2))(\overline{q}m+\ln(2)+l_\mathrm{max})} .
\end{align*}
This implies the second part of the inequality in
Theorem~\ref{thm:improved_bound} as
\begin{equation*}
\frac{w(\Omega)}{w(A)}=\left(\ex\left[\frac{1}{|\mathcal{P}(a)|}\right]\right)^{-1}%
\leq%
2^{1+\frac{\mu}{m}(\overline{q}m+\ln(2))(\overline{q}m+\ln(2)+l_\mathrm{max})}\;,
\end{equation*}
which completes the proof of Theorem~\ref{thm:improved_bound}.

\end{document}